\newtheorem{Remark}{Remark}
\newtheorem{proposition}{Proposition}
\newcommand{\qa}{{\bf a}}
\newcommand{\qd}{{\bf d}}
\newcommand{\qg}{{\bf g}}
\newcommand{\qh}{{\bf h}}
\newcommand{\qp}{{\bf p}}
\newcommand{\qq}{{\bf q}}
\newcommand{\qs}{{\bf s}}
\newcommand{\qu}{{\bf u}}
\newcommand{\qB}{{\bf B}}
\newcommand{\qF}{{\bf F}}
\newcommand{\qG}{{\bf G}}
\newcommand{\qI}{{\bf I}}
\newcommand{\qP}{{\bf P}}
\newcommand{\qQ}{{\bf Q}}
\newcommand{\Kmax}{K_{\mathrm{max}}}
\newcommand{\Csm}{\mathcal{C}_s}
\newcommand{\Usm}{\mathcal{U}_s}
\newcommand{\Kset}{ {\mathcal {K}}}
\newcommand{\Mset}{ {\mathcal {M}}}
\newcommand{\Qset}{ {\qq}}
\newcommand{\Pb}{P}
\newcommand{\Sn}{\sigma^2}
\newcommand{\Pu}{\varrho_\mathrm{u}}
\newcommand{\Unsm}{\mathcal{U}_{-s}}
\newcommand{\Csnm}{\mathcal{C}_{-s}}
\newcommand{\aset}{\qa_s}
\newcommand{\qbar}{\bar{\qq}}
\newcommand{\tauu}{\tau_{\mathrm{u}}}
\newcommand{\SINRk}{\mathrm{SINR}_{k}}
\newcommand{\SLINRk}{\mathrm{SLINR}_{k}}
\newcommand{\Ex}{\mathbb{E}}
\newcommand{\opt}{\mathrm{opt}}
\title{\fontsize{0.83cm}{1cm}\selectfont   Cluster-Wise Processing in Fronthaul-Aware Cell-Free Massive MIMO Systems}
\author{Zahra Mobini,~\IEEEmembership{Senior Member,~IEEE,} Ahmet Hasim Gokceoglu,  Li Wang,~\IEEEmembership{Senior Member,~IEEE,} Gunnar Peters,\\ Hyundong~Shin,~\IEEEmembership{Fellow,~IEEE}, and Hien Quoc Ngo,~\IEEEmembership{Fellow,~IEEE}}
\begin{document}

\bstctlcite{IEEEexample:BSTcontrol}
\maketitle

\begin{abstract} 
We exploit a general cluster-based network architecture for a fronthaul-limited  user-centric cell-free massive multiple-input multiple-output (CF-mMIMO) system under different degrees of cooperation among the access points (APs) to achieve scalable implementation. In particular, we consider a CF-mMIMO system wherein the available APs are grouped into multiple processing clusters (PCs) to share channel state information (CSI), ensuring that they have knowledge of the CSI for all users assigned to the given cluster for the purposes of designing resource allocation and precoding. We utilize the sum pseudo-SE metric, which accounts for intra-cluster interference and inter-cluster-leakage, providing a close approximation to the true sum   achievable SE. For a given PC, we formulate two optimization problems to maximize the cluster-wise weighted sum pseudo-SE under fronthaul constraints, relying solely on local CSI. These optimization problems  are associated with different computational complexity requirements. The first optimization problem jointly designs precoding, user association, and power allocation, and is performed at the small-scale fading time scale. The second optimization problem optimizes user association and power allocation  at the large-scale fading time scale. Accordingly, we develop a novel application of modified weighted minimum mean square error (WMMSE)-based approach to solve the challenging formulated  non-convex mixed-integer  problems. Numerical results show that (a) the proposed cluster-wise processing solutions  significantly outperform the heuristic approaches under both the statistical and instantaneous CSI-based designs, while statistical CSI-based design is good enough in some network configurations since it provides a better performance/implementation complexity trade-off;  (b) under limited fronthaul, our proposed cluster-wise processing frameworks achieve sum-spectral efficiency (SE)   that is competitive with state-of-the-art  network-wide processing solutions,  while avoiding the high computational complexity of  processing and the heavy overhead of CSI acquisition. 

\let\thefootnote\relax\footnotetext{
The work of  H. Shin was supported by the National Research Foundation of Korea (NRF) grant funded by the Korean government (MSIT) under RS-2025-00556064 and by the MSIT (Ministry of Science and ICT), Korea, under the ITRC (Information Technology Research Center) support program (IITP-2025-RS-2021-II212046
) supervised by the IITP (Institute for Information \& Communications Technology Planning \& Evaluation). (\textit{Corresponding authors: Hyundong Shin; Hien Quoc Ngo}).

Z. Mobini   is  with  the Department of Electrical and Electronic Engineering, The University of Manchester, Manchester M13 9PL, U.K., and also
 with the Centre for Wireless Innovation (CWI), Queen's University Belfast, BT3 9DT Belfast, U.K. (email:zahra.mobini@manchester.ac.uk).

A. gokceoglu1,  L. Wang, and G. Peters are with the  Huawei’s Sweden Research Center, Stockholm, Sweden (e-mail:
\{ahmet.hasim.gokceoglu1, leo.li.wang, gunnar.peters\}@huawei.com).

H.~Shin 
is with the Department of Electronics and Information Convergence Engineering,
Kyung Hee University,
1732 Deogyeong-daero, Giheung-gu,
Yongin-si, Gyeonggi-do 17104, Republic of Korea
(e-mail: hshin@khu.ac.kr). 

H. Q. Ngo   is with the Centre for Wireless Innovation (CWI), Queen's University Belfast, BT3 9DT Belfast, U.K., and is also with the Department of Electronic Engineering, Kyung Hee University, Yongin-si, Gyeonggi-do 17104, Republic of Korea (email: hien.ngo@qub.ac.uk).
}
\end{abstract}

\begin{IEEEkeywords}
  Cell-free massive  multiple-input multiple-output (CF-mMIMO),  cluster-wise processing, fronthaul,  resource allocation.   
\end{IEEEkeywords}
\vspace{-0.5cm}
\section{Introduction}
Cell-free massive multiple-input multiple-output (CF-mMIMO) is a cutting-edge wireless technology developed to address the extensive connectivity requirements and escalating data traffic demands of next-generation wireless networks~\cite{Mohammadali:survey:2024}. In CF-mMIMO, a large number of access points (APs) are distributed over a wide area to simultaneously and coherently serve multiple users. This technology integrates massive MIMO, network MIMO (also known as coordinated multipoint joint transmission, CoMP-JT), and cooperative networks~\cite{hien:2017:wcom}, capitalizing on their strengths to manage interference, provide high macro/micro-diversity gains, and achieve high array gains, ensuring ubiquitous connectivity.

 Since the seminal work on canonical CF-mMIMO~\cite{hien:2017:wcom}, this network topology has become a focal point of extensive research~\cite{Mohammadali:TCOM:2024}. Over the years, a substantial body of literature has emerged, focused on improving the practicality, spectral efficiency (SE), and energy efficiency of CF-mMIMO systems through advanced signal processing techniques and resource allocation strategies~\cite{Ngo:PROC:2024}. Early CF-mMIMO configurations assumed that i) all APs possessed network-wide channel state information (CSI) through a centralized processing unit (CPU) connected by fronthaul links, and ii) all APs transmitted/received information signals to/from all users during the downlink/uplink data transmission phases. While this configuration optimized system performance, it required a high degree of coordination among APs, introduced considerable computational complexity, and necessitated extensive fronthaul/backhaul signaling for CSI and data exchange. These factors posed significant challenges to scalability as the network size increased, whether in terms of the number of APs or users. To address these scalability challenges, a user-centric approach has been proposed in~\cite{Buzzi:WCL:2017,Hien:TGCN:2018, Ammar:TWC:2021}, wherein each user is served by a subset of APs rather than the entire network. This user-centric CF-mMIMO offers a more scalable solution for CF-mMIMO implementation as network sizes grow, while delivering performance that is close to the canonical CF-mMIMO in terms of achievable SE~\cite{Ngo:PROC:2024}.

 Although the theoretical frameworks for CF-mMIMO are well-established, the practical implementation of this technology still faces significant challenges. These challenges include practical and scalable signal processing, resource allocation (power control and user association), CSI knowledge, and fronthaul requirements, particularly as network size increases~\cite{Ngo:PROC:2024,Zahra_TWC_Huawei_2025}. Signal processing and resource allocation in CF-mMIMO systems can generally be implemented using either a centralized approach at the CPU or a distributed approach at each AP.
Centralized processing has been widely utilized in the CF-mMIMO literature for tasks such as precoding/combining design, including centralized minimum mean square error (MMSE) with varying levels of coordination in~\cite{Emil:WCOM:2020} and centralized zero-forcing (ZF) in~\cite{nayebi:2017:wcom,Pei:2020:TWC}, as well as for resource allocation in~\cite{Hien:TGCN:2018, Buzzi:WCOM:2020, Alonzo:2019:TGCN, Mohammadali:JSAC:2023, Hao:IOT:2024, Jiafei:WCL:2025}. While these methods optimize system performance, they also require extensive CSI sharing via fronthaul/backhaul links, which can heavily burden network resources. Moreover, the computational load of performing optimization at the CPU becomes increasingly complex due to the high dimensionality of the aggregated channels. These combined factors make centralized processing both impractical and difficult to scale in real-world applications.

 To address these limitations, distributed processing approaches have been proposed as a more scalable alternative. For example, the authors in~\cite{Atzeni:TWC:2021} introduced a framework for cooperative precoding design in CF-mMIMO systems, which eliminates the need for backhaul signaling for CSI exchange by employing over-the-air (OTA) signaling mechanisms to acquire the necessary information at APs. Furthermore, a distributed precoding design known as Team MMSE was introduced in~\cite{Lorenzo:2022:twc}.  Moreover,   distributed yet simple and sub-optimal power allocation algorithms have been proposed in~\cite{Emil:WCOM:2020,nayebi:2017:wcom}. Scalable fractional power control strategies along with conjugate precoding for downlink CF-mMIMO was proposed in~\cite{Giovanni:2021:SPAWC}, while distributed max-min power control  by training the neural network with only local CSI at each AP was proposed in~\cite{Sucharita:2019:Asilomar}.
While distributed processing offers better scalability and lower complexity by relying only on local channel estimates at each AP, it also has performance limitations compared to centralized methods. This is due to the fact that distributed approaches lack network-wide coordination between APs, which can lead to suboptimal resource allocation and low efficient interference management. The absence of network-wide control means that distributed approaches may not be able to fully utilize the available diversity gains or cancel interference as effectively as centralized approaches. Consequently, system performance may reduce in scenarios where coordination between APs is critical for obtaining optimal performance. Therefore, finding a balance between system performance and the degree of cooperation among APs becomes a key challenge for ensuring the scalability of CF-mMIMO systems.

To this end, partially centralized processing was considered as a promising solution that ensures scalability within the system architecture~\cite{Ngo:PROC:2024}. In particular, the authors in~\cite{Interdonato:TWC:2020} employed a semi-distributed version of partial ZF, which partially mitigates inter-AP interference by sharing a limited amount of CSI over the fronthaul network. However, this comes at the cost of increased computational complexity and fronthaul overhead. Furthermore, the authors of~\cite{Hien:JCOM:2021} harnessed a centralized ZF-based precoding for a subset of APs and distributed maximum ratio transmission precoding  for the remaining APs. However, the
significant drawback of these studies is that they only focused
on either precoding design or power control. Therefore, the study of how to efficiently perform partially centralized processing for more complicated methods such as joint precoding, user association, and power control in a CF-mMIMO system under imperfect fronthaul network is extremely timely and important.



\subsection{Key Contributions}
 To address the need for scalable CF-mMIMO in practical fronthaul-limited scenarios, we are inspired by the novel paradigm of cluster-wise processing~\cite{Ammar:TWC:2022}. Cluster-wise processing involves performing signal processing and resource allocation within a cluster (group) of APs, requiring only partial CSI sharing and potentially reducing the computational requirements due to the smaller dimensionality of the aggregated channels. This approach strikes a balance between having some coordination among APs and using a fully centralized system.  {Cluster-wise processing not only supports scalable implementation but also aids in managing inter-cluster interference by localizing coordination~\cite{Zhang:2024:WCL}}.  One fundamental challenge in designing cluster-wise schemes is the signal-to-interference-plus-noise ratio (SINR) metric, which is inter-dependent across all users and APs in different clusters. Therefore, any optimization framework based on the SINR maximization criterion may not be suitable for cluster-wise resource allocation. An alternative is to use leakage-based metrics~\cite{Sadek:2011:TWC,Ammar:TWC:2022}. Considering the notion of   hybrid signal-to-leakage-and-intra-cluster-interference-and-noise ratio (SLINR), which considers the desired signal, the intra-cluster interference,  and the leakage interfering with other users in other clusters,  the authors in~\cite{Ammar:TWC:2022},  developed     cluster-wise   resource allocation approaches for user-centric cell-free systems for two scenarios: system optimization is carried out at the APs or at multiple central units (CUs) controlling a subset (a cluster) of APs. However, this work did not account for fronthaul constraints and relied solely on instantaneous CSI. While instantaneous CSI-based optimization enables dynamic user association, precoding, and power allocation, it comes with high computational and signaling costs. The need for frequent CSI acquisition increases with the number of antennas, subcarriers, and users.
A more scalable alternative is statistical CSI-based design, which takes advantage of channel hardening to allow efficient resource allocation based on large-scale fading, rather than the fast-varying small-scale fading. This approach significantly reduces computational complexity, as system parameters remain stable over longer time scales, and minimizes the need for real-time downlink CSI estimation, making it particularly suitable for large-scale CF-mMIMO deployments.
However, in certain propagation environments  or for some system setups channel hardening deteriorates~\cite{Zahra:2025:LectureNote}, rendering statistical CSI-based power allocation suboptimal. As a result, instantaneous CSI-based power allocation remains necessary in some CF-mMIMO configurations. Therefore, a comparative analysis of statistical versus instantaneous CSI-based designs remains unexplored for cluster-wise  fronthaul-limited CF-mMIMO.

 In this context, we adopt the hybrid SLINR metric for a fronthaul-limited CF-mMIMO system, where APs are grouped into multiple processing clusters (PCs). We formulate two distinct optimization problems, each operating at a different time scale: 1) Instantaneous CSI-based optimization: Jointly optimizes precoding, user association, and power allocation at the small-scale fading time scale, enabling adaptation to rapid channel variations. 2) Statistical CSI-based optimization: Jointly optimizes power control and user association based on large-scale fading statistics, leveraging the channel hardening property of CF-mMIMO systems. To this end, we introduce a hardening-based pseudo-SE, which simplifies resource allocation while maintaining performance in large-scale networks. For efficient cluster-wise processing, we develop two novel applications of a modified weighted minimum mean square error (WMMSE)-based approach. The key motivation behind using modified WMMSE algorithms is their ability to jointly optimize user association, power allocation, and/or precoding vectors within a unified framework. Furthermore, each iteration of the WMMSE algorithm is computationally efficient, making it feasible for large-scale CF-mMIMO systems without imposing excessive overhead.


 The key contributions of this paper are summarized as
follows:
\begin{itemize}
\item We provide a cluster-based network architecture for a fronthaul-limited   user-centric CF-mMIMO system with multiple-antenna  APs.  The proposed  framework is very general and can cover different CF-mMIMO implementations. In particular, it can be degenerated to different special cases such as  CF-mMIMO with network-wide fully centralized operation  or   fully distributed operation.
\item We formulate two optimization problems   for maximizing the cluster-wise weighted sum pseudo-SE/hardening-based pseudo-SE  under per-AP transmit power  and fronthaul constraints which can be carried out at two different time scales.      Two novel modified WMMSE-based algorithms are then proposed to solve the challenging formulated  non-convex mixed-integer  problems.
\item Numerical results show that the proposed cluster-wise processing solutions significantly outperform the heuristic approaches for both the statistical and instantaneous CSI-based designs. 
They also confirm that CF-mMIMO can be efficiently deployed by utilising our proposed cluster-wise processing  without significant loss in performance but with much lower fronthaul requirements compared to centralized  system. In fact,  they show that cluster-wise  processing  is highly preferable compared to network-wide alternatives, either in the regime of  a high number of APs or high user loads.  In addition, in various simulation setups, e.g., large values of transmit antennas at APs or stringent  fronthaul constraints, the  statistical CSI based design leads to a negligible performance loss, compared to instantaneous CSI-based  design. Therefore, statistical CSI-based design is enough since it provides a better performance/implementation complexity
trade-off. 
\end{itemize}

\textit{Notation:} We use bold upper (lower) case letters to denote matrices (vectors).  The superscripts  $(\cdot)^T$ and $(\cdot)^\dag$ stand for the  transpose and conjugate-transpose, respectively. A zero-mean circular symmetric complex Gaussian distribution having a variance of $\sigma^2$ is denoted by $\mathcal{CN}(0,\sigma^2)$, while $\mathbf{I}_N$ denotes the $N \times  N$ identity matrix.   $\mathbb{E}\{\cdot\}$ denotes the statistical expectation. Finally, operations $\mathrm{Re}(\cdot)$ and $|\cdot|$ represent  the real part  of a complex number and the size (or cardinality) of a set. 
\begin{figure}[t]
	\centering
	\vspace{0em}
	\includegraphics[width=72mm, height=50mm]{  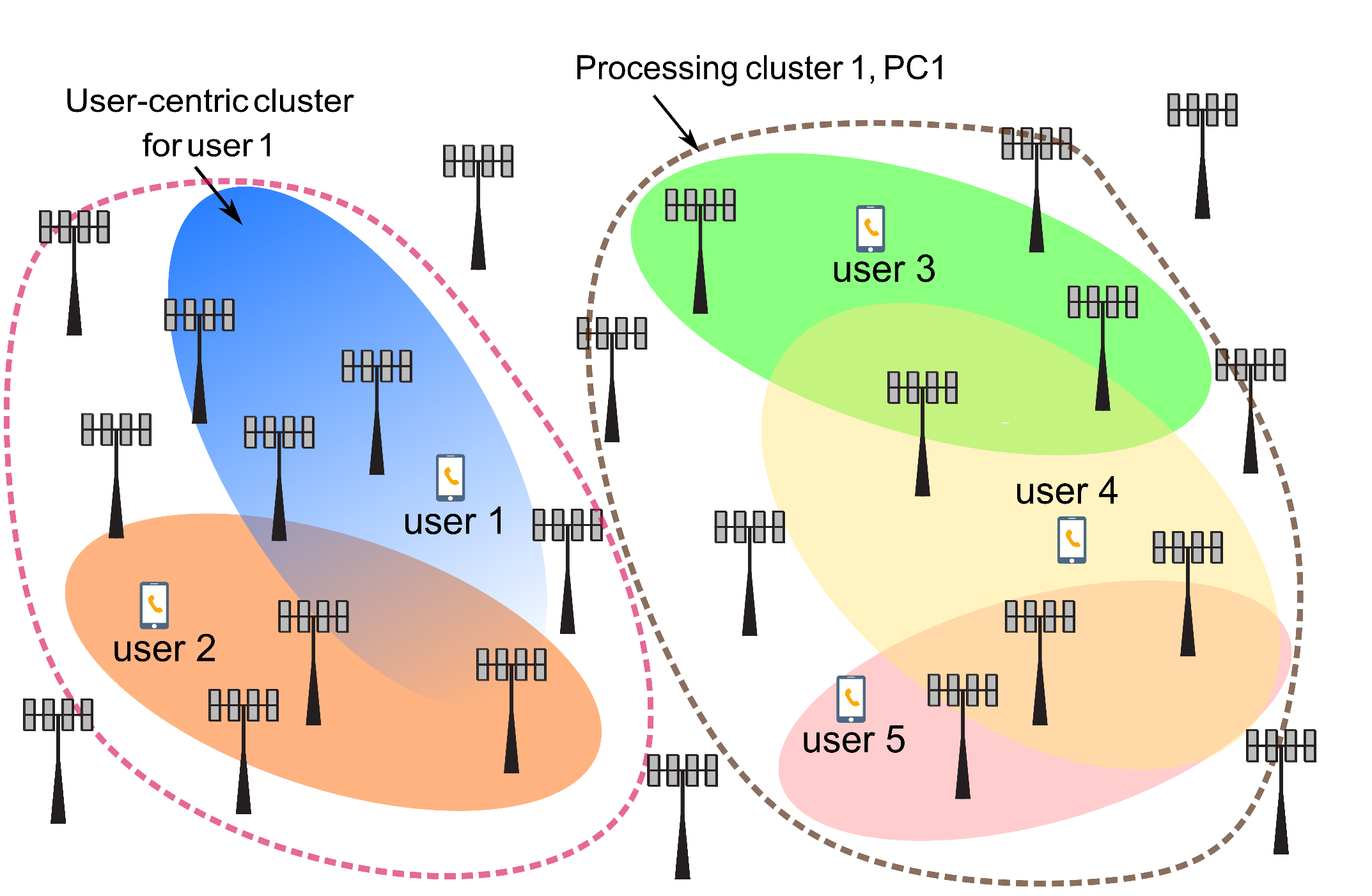}
	\caption{  User centric CF-mMIMO with  cluster-wise processing.}
	\label{fig:Fig1}
\end{figure}
\section{Generic Cluster-Wise Processing-Based Network Architecture }
We consider a  CF-mMIMO system comprising $M$ $L$-antenna APs and $K$ single-antenna users. The sets of APs and users are denoted by $\mathcal{M}=\{1,\cdots,M\}$ and $\mathcal{K}=\{1,\cdots,K\}$, respectively. 
The baseband unit (BBU) functionality is split into two entities: baseband low (BBL) and baseband high (BBH). The BBH handles processing tasks such as precoding, encoding, and radio resource management, while the BBL is responsible for tasks like weight applications, error correction, and modulation. Each BBH is connected to its associated BBL via limited-capacity fronthaul links to transmit information such as precoding vectors, information signals, and power allocation coefficients. In addition, BBHs are interconnected through backhaul links to facilitate information exchange between APs. {It is notable that in this work, we do not impose explicit backhaul capacity constraints to focus on fronthaul limitations, which are usually the main bottleneck in CF-mMIMO systems, especially when fronthaul links are wireless or limited. Backhaul links—typically high-capacity fiber or Ethernet—are assumed sufficient to support intra-cluster signalling~\cite{Park:2014:spm}. Nevertheless, cluster-wise processing for CF-mMIMO systems under joint fronthaul and backhaul constraints is an interesting direction worthy of future research.}

We assume a frequency-flat slow fading channel model for each orthogonal frequency-division multiplexing (OFDM) subcarrier.   Here, we assume that the cyclic prefix  length is greater than the maximum delay spread, ensuring the orthogonality between OFDM subcarriers. For notational simplicity, the subcarrier index will be omitted. Let $\qg_{mk} \in \mathbb{C}^{L \times 1}$ represent the complex channel vector between the $m$-th AP and the $k$-th user. This channel vector can be modeled as
\begin{equation}
 {\qg}_{mk} = \beta ^{1/2}_{mk}\qh_{mk},
\end{equation}
where $\beta_{mk}$ denotes the large-scale fading coefficient that includes path-loss and
shadowing effects, while  $ {\qh}_{mk}\in \mathbb{C}^{L\times 1}$ is the small-scale fading vector whose entries are independent and identically distributed (i.i.d.)  $\mathcal{CN} (0, 1)$ RVs. Large-scale fading coefficients change slowly and may be constant in range of many small-scale fading coherence intervals (over time and frequency bands). Hence, it is assumed that these coefficients are priory known  at each BBL/BBH.

 We consider the concept of cluster-wise processing in conjunction with user-centric association. Specifically, in cluster-wise processing, the available APs are divided into $S$ disjoint PCs represented by the set of sets   $\mathcal{C}=\{\mathcal{C}_1, \mathcal{C}_2, \ldots, \mathcal{C}_S$\}. {APs within each PC $\mathcal{C}_s$  share CSI via intra-cluster backhaul links, ensuring knowledge of the CSI for all users served by APs in that cluster to facilitate resource allocation and precoding\footnote{While this coordination introduces additional overhead, detailed modeling of this overhead is beyond the scope of this paper and is left for protocol-level and cross-layer investigations.}}.
 
 Each user $k$ is associated with only one PC and can be coherently served by the APs, or a subset of APs, within that cluster.
Fig.~\ref{fig:Fig1}  shows a typical cluster-wise processing-based user-centric CF-mMIMO architecture including  five users served by their associated APs  and two PCs. The colored regions indicate which set of APs  serve which users, while the purple and brown regions show the PCs.

\begin {Remark} When $S=1$, there is a single PC consisting of all the APs in the network. This corresponds to the CF-mMIMO system with fully centralized (network-wide) processing. In contrast, when $S=M$, the system operates with fully distributed processing.
\end{Remark}

\begin {Remark}  Unlike traditional cellular networks, where users are typically associated with a single AP, our model allows each user to be coherently served by multiple APs within its  associated PC. This preserves the joint transmission and reception characteristics of CF-mMIMO, enabling macro-diversity and effective interference suppression. While we introduce cluster-wise processing to manage fronthaul limitations and computational complexity, these clusters do not enforce fixed cell boundaries. Instead, they support flexible user-AP associations and maintain the core CF-mMIMO principle of user-centric service. In contrast to clustered CoMP, where coordination is often limited, massive MIMO properties are not applied, and user association is predefined by fixed cell layout, our approach allows dynamic and optimized cooperation among distributed APs within each cluster.
\end{Remark}
\subsection{Uplink Training  for Channel Estimation
}\label{sec:uplinktraining}
In the uplink training phase, all users send pilot signals to the APs. {This full-pilot model is adopted to enable user-centric coordination and coherent joint transmission, in line with foundational CF-mMIMO studies~\cite{Ngo:PROC:2024}. Alternative strategies, such as efficient pilot reuse or partial CSI acquisition, may be employed to reduce overhead while preserving acceptable system performance.} 

Accordingly, each AP can estimate the corresponding channels to all users using the obtained pilot signal. Note that at each AP, the channel estimation is performed at its BBH. We consider orthogonal  pilot assignment\footnote{Orthogonal  pilot assignment assumption  is applicable in many scenarios, particularly when the coherence interval is sufficiently long and/or the number of users is not excessively large. In practice, environments with medium or low mobility often experience a relatively long coherence interval~\cite{marzetta2016fundamentals, Peng:TCOM:2023,Buzzi:WCOM:2020}. However, in high-mobility or dense network settings, non-orthogonal pilots  are necessary, which require more advanced techniques, such as pilot contamination mitigation.}. This requires $\tauu\geq K$, where $\tauu$ is the uplink training duration.   AP $m$ uses the MMSE estimation technique to estimate the channels to all users. The MMSE channel estimate of $\qg_{mk}$, $\hat{\qg}_{mk}$, includes $L$ i.i.d. $\mathcal{CN}(0,\gamma_{mk})$ elements, where $	\gamma_{mk}= 
  \frac{\tauu \Pu \beta_{mk}^{2}}{\tauu \Pu \beta_{mk}+1}$, and $\Pu$ is the  the normalized transmit power of each pilot symbol.


\subsection{Downlink  Payload Data Transmission}
For each user $k$, we choose a PC $\Csm$, where PC selection can be based on diverse criteria, including the value of large-scale fading coefficient $\beta_{mk}$. Our proposed PC  selection strategy will be discussed in section~\ref{Sec:numerical}.
 Also,  we define the set $\Usm \subset \Kset$ representing the  users that need to be served by PC $\Csm$, while $\mathcal{U}_{-s}$ represents the set of users that have no association with APs
in set $\Csm$, i.e., $ \mathcal{U}_{-s} = \mathcal{K}\setminus \Usm$.
It is notable that each AP $m$ in PC $\mathcal{C}_s$ communicates only with a subset of users in  $\Usm$. In other words,    each user $k$ will be served by a subset of APs (not all APs) within  PC $\Csm$, which is referred to as the user-centric cluster for user  $k$. 
 We use the binary variable $a_{mk}$ to show the user assignment for each AP $m$,  so that
\begin{align}
\label{a}
a_{mk} \triangleq
\begin{cases}
  1, & \text{if AP $m$ associates with user $k$,}\\
  0, &  \mbox{Otherwise}, \qquad\forall m, k.
\end{cases} 
\end{align}
%
Therefore, the $L \times 1 $ signal transmitted by the $m$-th AP in $\Csm$\footnote{In what follows, for ease of notation,  we focus on PC $\Csm$ and do not include the PC index in some system parameters.}  can be expressed as
\begin{equation}\label{eq:Sm}
	\qs_{m}=  \sum\nolimits _{k\in \Usm}a_{mk}\sqrt{\eta_{mk}} \qq_{mk}x_{k},  \forall m\in \Csm,
\end{equation}
where $\qq_{mk}$ denotes the downlink precoding vector constructed for the $m$-th AP to the $k$-th user. Moreover, $x_{k}$ is the information symbol  intended for the $k$-th user, $\mathbb{E}\{x_{k}x_{k}^H \} = 1$, and  $\eta_{mk}$  represents  the $k$-th user power control coefficient.
Each AP $m$ is required to meet the power constraint  

\begin{align} \label{eq:Pconst}
   \sum\nolimits_{k \in \Usm} \|a_{mk}\sqrt{\eta_{mk}} \qq_{mk}\|_2^2\leq \Pb,~\ \forall m\in \Csm,
\end{align}
where $\Pb$ is the maximum transmit power  of each AP.
 The  $k$-th user receives signal contributions from  the APs in  $\Csm$; the observable signal is given by
\begin{align}\label{eq:received_r}
r_{k}
=&\underbrace{ 
\sum\nolimits _{m\in \Csm}
a_{mk}\sqrt{\eta_{mk}}
\qg^{H}_{mk} \qq_{mk}x_{k}}_{\text{Desired signal}}+\nonumber\\
&\underbrace{ 
\sum\nolimits _{m\in \Csm}
\sum\nolimits _{\substack {k'\in \Usm \\ k' \neq k}}
a_{mk'}\sqrt{\eta_{mk'}}
\qg^{H}_{mk} \qq_{mk'}x_{k'}}_{\text{Intra-cluster interference}}+\nonumber
\\
&\underbrace{ 
 \sum\nolimits _{m\in \Csnm} \sum\nolimits _{\substack {k'\in\mathcal{U}_{-s} }}\!\!\!
a_{mk'}\sqrt{\eta_{mk'}}
\qg^{H}_{mk} \qq_{mk'}x_{k'}}_{\text{Inter-cluster interference}} +  n_k,
\end{align}
where $\Csnm=\mathcal{C}\setminus \Csm$ and $n_k\sim \mathcal (0, \sigma^2)$ is the  additive white Gaussian noise.   The first term in~\eqref{eq:received_r} represents the desired signal part, while the second term represents the intra-cluster
interference (all the signal components intended
for user $k'\in \Usm, k' \neq k$ from the APs in cluster $\Csm$), and the third term is inter-cluster interference (all the signal components intended
for user $k'\in \Unsm$ from the APs in other clusters than $\Csm$). 
\subsection{Fronthaul Requirements}\label{app:Fronthaul Requirements}
We formulate the total fronthaul requirement at each AP $m$ for transmission from BBH $m$ to BBL $m$  which consists of two parts: 1) 
fronthaul requirement  for downlink data transmission, $\mathrm{FH}_{m,\mathrm{da}}$, and 2) fronthaul requirement for sending the precoding vectors, $\mathrm{FH}_{m,\mathrm{pr}}$. To this end, the  per-AP fronthaul capacity constraint can be written as
\begin{align}\label{eq:Kmax_cons}
&\mathrm{FH}_{m,\mathrm{da}}+\mathrm{FH}_{m,\mathrm{pr}} \leq \mathrm{FH}_{\mathrm{max}},~~~ \forall m \in \Csm,
\end{align}
where $\mathrm{FH}_{\mathrm{max}}$ is the per-AP maximum fronthaul capacity. More specifically, for each AP $m$, the fronthaul consumption for transmitting information symbols to its associated users with $a_{mk}=1$, utilizing packed-based evolved common public radio interface (eCPRI) for the fronthaul  transmission, is given by
   \begin{align}\label{eq:FH1}
\mathrm{FH}_{m,\mathrm{da}}= \frac{\log_2 (M_{\mathrm{mo}}) N_{\mathrm{sub}} N_{\mathrm{o}}\sum_{k \in \Usm} a_{mk}}{\varepsilon_\mathrm{{cp}}\delta _\mathrm{da}},
\end{align}
where $M_{\mathrm{mo}}$ is the modulation cardinality, $N_{\mathrm{sub}}$ is the number of OFDM subcarriers, {while  $N_{\mathrm{o}}$ is the number of OFDM symbols}. Moreover, $\delta_{\mathrm{da}}$  shows the  transmit
delay for the data,   and $\varepsilon_{\mathrm{cp}}$ is the efficiency of the  CPRI. 
In addition, the required fronthaul requirement for sending the precoding vectors can be written as
\begin{align}\label{eq:FH2}
\mathrm{FH}_{m,\mathrm{pr}}= \frac{2L \sum_{k\in \Usm} a_{mk} N_{\mathrm{bits}}N_{\mathrm{Gran}}}{\varepsilon_\mathrm{{cp}}\delta_\mathrm{{pr}}} ,
\end{align}
where $N_{\mathrm{bits}}$ denotes the number of quantization bits, while $N_{\mathrm{Gran}}$ shows the precoding granularity, $\delta_\mathrm{{pr}}$  shows the transmit delay of the precoding weights.
One intuitive observation from~\eqref{eq:FH1} and~\eqref{eq:FH2}  is that the fronthaul consumption scales with the total number of users that each AP serves. Accordingly, in a fronthaul limited CF-mMIMO system with the  fronthaul constraint~\eqref{eq:Kmax_cons}, we must constrain the downlink traffic bandwidth to be fronthauled  by restricting the number of users each AP  $m \in \Csm$  serves  as
\begin{align}\label{eq:Kmax}
\sum\nolimits_{k \in \Usm} a_{mk}  \leq \left\lfloor\frac{\mathrm{FH}_{\mathrm{max}}}{ \left(\alpha_1 \log_2 (M_{\mathrm{mo}}) + \alpha_2\right)}\right\rfloor \triangleq \Kmax,
\end{align}
where $\lfloor \cdot \rfloor$ is the floor function, $\alpha_1\triangleq  \frac{N_{\mathrm{sub}} N_{\mathrm{o}}}{\varepsilon_\mathrm{{cp}}\delta_\mathrm{{da}}}  $, and $\alpha_2 \triangleq \frac{2 L N_{\mathrm{bits}}N_{\mathrm{Gran}} }{\varepsilon_\mathrm{{cp}}\delta_\mathrm{{pr}}} $.

On the other hand, a higher $M_{\mathrm{mo}}$ is essential to achieve the high SE in the system. However, from~\eqref{eq:FH1}, fronthaul consumption for sending information symbols increases with $M_{\mathrm{mo}}$, thus, there is a trade-off  between the SE and fronthaul requirement.
From the information-theoretic perspective,  modulation-constrained achievable SE, $\mathcal{R}_{\mathrm{mo}}$, is limited by  the AWGN channel capacity, $\bar{\mathcal{C}}$, such that $\mathcal{R}_{\mathrm{mo}} < \bar{\mathcal{C}}$.
Moreover, $\mathcal{R}_{\mathrm{mo}}$ cannot exceed the entropy of the modulation constellation, i.e., $\mathcal{R}_{\mathrm{mo}}< \log_2 (M_{\mathrm{mo}})$. Accordingly,
$\mathcal{R}_{\mathrm{mo}}$ can be upper-bounded as
$\mathcal{R}_{\mathrm{mo}} \leq \mathrm{min} (\bar{\mathcal{C}},\log_2 (M_{\mathrm{mo}}))$~\cite{Urlea:2021:JLT}.
{Therefore, we consider the following   achievable SE constraint 
\begin{align}\label{eq:FHMorder}
R_{k} \leq \log_2 (M_{\mathrm{mo}}),
\end{align}
where $R_{k}$ denotes the achievable SE for user $k$, in addition to the fronthaul constraint given in~\eqref{eq:Kmax}.}
\section{Instantaneous CSI-Based Cluster-Wise   Sum pseudo-SE Maximization }
 Here, we assume that the users perfectly know the  channels. These channels  can be estimated by the users using downlink pilots.   We would like to highlight that we also assume   the channel estimation error at both users and APs is very small, i.e., the estimated channel $\hat{\qg}_{mk}$   is almost identical to the true channel    $\qg_{mk}$, $\forall m, k$. This implies that the current results for the instantaneous CSI-based design serve as an upper bound.
Given the instantaneous CSI, the    achievable SE for user $k$ can be written as:
\begin{align}\label{eq:perfect:SE}
R_k=\log_2 (1+{\SINRk}),
\end{align}
 where
\begin{align}
~\label{eq:Perfect:SINRk}
{\SINRk}\!\!=\!\frac{ \big|\sum_{m \in \Mset}a_{mk}\sqrt{\eta_{mk}} \qg_{mk}^{H} \qq_{mk}\big|^{2}}
{ \sum\nolimits _{\substack{k'\in \Kset \\ k'\neq k}}\!\!\big|\!\sum_{m \in \mathcal{M}}a_{mk'}\sqrt{\eta_{mk'}}  \qg_{mk}^{H} \qq_{mk'}\!\big|^{2}\!\!+\!\Sn}.
\end{align}
\subsection {Hybrid Leakage-Intra-Cluster-Interference}
Based on~\eqref{eq:received_r} the strength of the desired signal component for user $k \in \Usm$ is given by
\begin{align}
\mathrm{DS}_{k}=&  \left|\sum\nolimits_{m \in \Csm}a_{mk}\sqrt{\eta_{mk}} \qg_{mk}^{H} \qq_{mk}\right|^{2}.
\end{align}
while the strength of the intra-cluster interference at  user $k$  can be written as
\begin{align}
\mathrm{ICI}_{k}= \sum\nolimits_{\substack {k'\in \Usm \\ k' \neq k}}\left|\sum\nolimits_{m \in \Csm}a_{mk'}\sqrt{\eta_{mk'}}  \qg_{mk}^{H} \qq_{mk'}\right|^{2}.
\end{align}
Now, we define  the quantity, called  \emph{leakage} interference experienced by the users in $\mathcal{U}_{-s}$ from APs in $\Csm$ by serving user $k$ as
\begin{align}
\mathrm{L}_{k}=\sum\nolimits_{k'\in  \Unsm}\left|\sum\nolimits_{m \in \Csm}\sqrt{t_{k'}}a_{mk}\sqrt{\eta_{mk}}  \qg_{mk'}^{H} \qq_{mk}\right|^{2},
\end{align}
where the binary parameter $t_{k'}$ represents the association assumption about user $k' \in \Unsm$. In particular, $t_{k'}=1$ if user $k'$ is scheduled by at least one AP in $\mathcal{C}_{-s}$;  $t_{k'}=0$ otherwise. We thus define hybrid expressions 
in terms of so-called \emph{SLINR} that account for both the leakage interference and intra-cluster interference as
\begin{align}\label{eq:SLINRk}
\SLINRk=\frac{ \mathrm{DS}_{k}}
{\mathrm{ICI}_{k}+\mathrm{L}_{k}+\Sn}.
\end{align}
The  SLINR expression  in~\eqref{eq:SLINRk} depends only on locally constructed precoding vector and local CSI in each PC.
Accordingly, a  pseudo-SE  between each  PC  $\Csm$ and its user $k \in \Usm$ is defined as
\begin{align}\label{eq:SLINR}
\zeta_{k}=\log _{2} (1+\SLINRk).
\end{align}

\subsection{Problem Formulation}
In this section, we aim to jointly  optimize precoding, downlink transmit powers, and user association   for the maximization of the system sum pseudo-SE, subject to per-AP fronthaul capacity and maximum
transmit power constraints for the given PC $\mathcal{C}_s$. Accordingly, in what follows we formulate joint optimization problem for a given small-scale fading coherence time. 
We would like to highlight that the sum pseudo-SE optimization is motivated by two main factors. First, it enables the development of a cluster-wise processing by relying solely on local variables (i.e., precoding vectors, power allocation coefficients, and CSI) in the given PC. In fact, in actual   SE-based designs, the APs should have global CSI and precoding knowledge. In practice, it is not easy/scalable to measure/obtain the CSI between different APs and users at each AP, especially when there are high number of users or APs in the network. Second, pseudo-SE criterion emphasizes the importance of maximizing the useful signal while minimizing leakage and intra-cluster interference.   By minimizing leaked interference to other users, users in other PCs  improve  their SINR by reducing the interference they experience. Additionally, addressing intra-cluster interference benefits users within the same PC. Balancing leakage and intra-cluster interference prevents uniform scaling of the AP's beam power in both signal and leakage terms during optimization~\cite{Ammar:TWC:2022}.

 We note that, for calculating the pseudo-SE of user $k \in \Csm$ as given in~\eqref{eq:SLINR},  PC $\Csm$ requires the parameter $t_{k'}$, which accounts for the association decisions of users in other PCs, i.e., $\mathcal{C}_{-s}$. In centralized resource allocation, each AP can have knowledge about the association of all the users. However, in cluster-wise processing, a PC might not know the user associations in other PCs. To this end, for the cluster-wise resource allocation, it is reasonable to assume that $t_{k'}=1$, i.e., user $k'$ is scheduled by at least one of the APs in its serving cluster.



For convenience, let  $\aset = \{a_{mk}: m \in \Csm, k \in \Usm\}$  denote the  user-association control variable for PC $\mathcal{C}_s$, $\boldsymbol {\eta }_s$ denote the set of power control coefficients, $\boldsymbol {\eta}_s=\{\eta_{mk}: m \in \Csm, k \in \Usm\}$. Also, let $\Qset_s = \{\qq_{mk}: m \in \Csm, k \in \Usm\}$  denote  the collective precoding vector from  APs in PC $\Csm$ to user $k$. Accordingly, for the  PC $\Csm$, the joint  optimization problem  can be formulated as
\begin{subequations}\label{eq:problem2}
\begin{align}
\max _{\boldsymbol {\eta}_s, \aset, \Qset_s} ~&\sum\nolimits_{k \in \Usm}    w_k\zeta_{k}(\boldsymbol {\eta}_s, \aset, \Qset_s) \label{eq:problem2:O1}
\\
&\hspace{-2em}\text {st.}\hspace{2em}\eta _{mk}\geq 0, \qquad  m\in \mathcal{C}_s,~k\in\mathcal{U}_s, \label{eq:problem2:C1} 
\\
&\hphantom {\text {st.}}R_{k} \leq \log_2 (M_{\mathrm{mo}}), \qquad k\in\mathcal{U}_s,  \label{eq:problem2:C2} 
\\
&\hphantom {\text {st.}}  \sum\nolimits_{k \in \Usm}a_{mk}\leq \Kmax, \qquad \forall m\in \mathcal{C}_s, \label{eq:problem2:C3} 
\\
&\hphantom {\text {st.}} \sum\nolimits _{k\in \Usm} \left \|a_{mk}\sqrt{\eta_{mk}}\qq_{mk}\right \|_{2}^{2} \leq \Pb, \qquad  \forall m\in \mathcal{C}_s. \label{eq:problem2:C4}
\end{align}
\end{subequations}
where $w_k$ presents the priority weight associated with user $k$.

{Fronthaul constraint~\eqref{eq:problem2:C2} includes the actual  achievable SE $R_{k},\ \forall k\in\mathcal{U}_s$, which inherently couples all users and APs across different  PCs. This strong interdependence makes it challenging to incorporate the constraint directly within a decentralized or cluster-wise optimization framework. As such, solving the optimization with this constraint in place would render the proposed scalable, distributed algorithm design infeasible.  To address this issue, we investigated three strategies: 
1)  Relaxation approach: we solve the cluster-wise optimization without enforcing constraint~\eqref{eq:problem2:C2} during the iterative algorithm, and enforce it in a post-processing step on the resulting power allocation; 
2)  Approximation approach:  we approximate $R_k$ by an upper bound $R_k^{\mathrm{up}}$, which considers only intra-cluster interference and ignores inter-cluster effects, thus making it a local function. This allows replacing the actual constraint with $R_k^{\mathrm{up}} \leq \Kmax$; and
3)  Replacement approach:  we replace $R_k$ with the pseudo-SE $\zeta_k$, which captures both intra-cluster interference and inter-cluster leakage.
Our simulation results show that the approximation and replacement approaches lead to   performance degradation—approximately $23\%$ and $10\%$ reductions, respectively, in the system’s sum SE (evaluated for $L=20$, $K=8$, $M=10$, $\mathrm{FH_{max}}=10$ Gbps, and $M_{\mathrm{mo}}=32$), compared to the relaxation approach.   Importantly, the $R_k$ obtained from the optimization still represents the \emph{achievable SE}. When the post-processing step caps $R_k$ to $\log_2(M_{\mathrm{mo}})$, the system transmits at a SE lower than the achievable SE, which provides an additional safety margin against channel impairments and estimation errors, thereby  increasing the transmission reliability. Therefore, although it simplifies the methodology, we adopt the relaxation approach due to its better performance and ease of deployment. Notably, our post-processing step ensures that the final solution satisfies the modulation order and fronthaul constraints, ensuring feasibility in practical implementations. In  particular, we  consider the  optimization problem}

\begin{subequations}\label{eq:problem3}
\begin{align}
\max _{\boldsymbol {\eta}_s, \aset, \Qset_s} &\sum\nolimits_{k \in \Usm} \omega_k  \zeta_{k}(\boldsymbol {\eta }_s,\aset,\Qset_s) \label{eq:problem3:O1}
\\
&\hspace{-2em}\text {st.}\qquad\eta _{mk}\geq 0,\qquad  \forall m\in \mathcal{C}_s,~k\in\mathcal{U}_s, \label{eq:problem3:C1} 
\\
&\hphantom {\text {st.}} \sum\nolimits_{k \in \Usm}a_{mk}\leq \Kmax, \quad \forall m\in \mathcal{C}_s, \label{eq:problem3:C3} 
\\
&\hphantom {\text {st.}}  \sum\nolimits _{k\in \Usm} \left \|a_{mk}\sqrt{\eta_{mk}}\qq_{mk}\right \|_{2}^{2} \leq \Pb, ~\quad  \forall m\in \mathcal{C}_s, \label{eq:problem3:C4}
\end{align}
\end{subequations}
and    then impose the fronthaul constraint as a post-processing step,  given by  
\vspace{-0.5em}
\begin{align} 
    R_{k, \mathrm{post}} = \min({R_{k},\log_2 (M_{\mathrm{mo}})}).
\end{align}

Optimization problem~\eqref{eq:problem3} is  obviously mixed-integer and non-convex due to the user association control variables $a_{mk},~m \in \Csm, k \in \Usm$,   the presence of unknown optimization
variables $\qq_{mk}$ and $\eta_{mk}, m \in \Csm, k \in \Usm$, appearing as products   in  both nominator and denominator of $\zeta_{k}$, and the non-convex constraints \eqref{eq:problem3:C3} and  \eqref{eq:problem3:C4}.
To this end, we first define $\qbar_{mk}\triangleq  a_{mk}\sqrt{\eta_{mk}}\qq_{mk}$.  Hence, constraint \eqref{eq:problem3:C4}  is transformed to a convex one as
\begin{equation}\label{const_powORU_convex}  
\sum\nolimits_{k \in \Usm}{\|\qbar_{mk}\|_2^2}\leq \Pb.
\end{equation}
Moreover, $\zeta_{k}(\boldsymbol {\eta}_s, \aset, \Qset_s)$ in the objective function of~\eqref{eq:problem3} can be re-expressed in terms of $\qbar_s\triangleq\{\qbar_{mk}: m\in \Csm,  k \in \Usm\}$  as~\eqref{eq:SLINRk2} on top of the next page.

\begin{figure*}
\begin{align}
~\label{eq:SLINRk2}
\zeta_{k}(\qbar_{s})=\log_2\Bigg(1+\frac{ \big|\sum_{m \in \Csm}  \qg_{mk}^{H} \qbar_{mk}\big|^{2}}
{ \sum_{\substack {k'\in \Usm \\ k' \neq k}}\big|\sum_{m \in \Csm}   \qg_{mk}^{H} \qbar_{mk'}\big|^{2}+ \sum\nolimits_{k'\in  \Unsm}\big|\sum\nolimits_{m \in \Csm}\sqrt{t_{k'}}   \qg_{mk'}^{H} \qbar_{mk}\big|^{2}+\Sn}\Bigg). \tag{22}
\end{align}
	\hrulefill
    \vspace{-1em}
\end{figure*}

To deal with constraint~\eqref{eq:problem3:C3}, we notice that  each user $k$ is served by AP $m$, $m \in \Csm$, if and only if its precoding vector $\qq_{mk}$ (or equivalently its associated $\qbar_{mk}$) is nonzero. In other words, we can characterize the user  association  by the
indicator function

\setcounter{equation}{22}
\vspace{-0.2em}
\begin{equation}
 {1\kern -0.27em 1}\left \{ \left \|\qbar_{mk}\right \|_{2}^{2} \right \} = \left \{ \begin{array}{l} 0, \quad \text {if} ~ \left \|\qbar_{mk}\right \|_{2}^{2} = 0 \\ 1, \quad \text {otherwise}. \end{array} \right.
 \end{equation}
Accordingly, the  per-AP fronthaul constraint in~\eqref{eq:problem3:C3} can be casted as
\vspace{-0.2em}
\begin{equation}
 \sum\nolimits _{k \in \Usm} {1\kern -0.27em 1}\left \{ \left \|\qbar_{mk}\right \|_{2}^{2} \right \}  \leq \Kmax, \quad \forall m \in \Csm.
 \end{equation}
 In this way, the problem of determining user-AP association $\aset$,   precoding vectors  $\Qset_s$, and power control coefficients  $\boldsymbol {\eta}_s$   is integrated into a single task\footnote{Incorporating user association within the precoding optimization framework does not increase the overall complexity of the algorithm, as it will be discussed in Subsection V-C.} determining the precoding vector $\qbar_{mk}$, $m \in \Csm, k \in \Usm$, for each user $k$. Accordingly, Problem~\eqref{eq:problem3} can be equivalently reformulated as
\vspace{-0.5em}
\begin{subequations}\label{eq:problem4}
\begin{align}
\mathop {\mathrm {max\kern 0pt}} _{\left \{\qbar_{mk} | m\in \Csm, k\in \Usm\right \}} & \sum\nolimits _{k\in\Usm} w _{k} \zeta_{k} (\qbar_s) \label{eq:problem4:C1}
\\
 &\hspace{-2.5em} \mathop {\mathrm {st.\kern 0pt}}\quad  \!\!~\sum\nolimits _{k\in \Usm} \left \|\qbar_{mk}\right \|_{2}^{2} \leq \Pb, ~~\forall m\in \Csm, \label{eq:problem4:C2}
\\
&\sum\nolimits _{k\in \Usm} {1\kern -0.27em 1}\left \{ \left \|\qbar_{mk}\right \|_{2}^{2} \right \} \leq \Kmax,~\forall m\in \Csm.  \label{eq:problem4:C3}
\end{align}
\end{subequations}


\vspace{-2em}
\subsection{Modified WMMSE-Based Approach} \label{sec:CLWMMSE} 
Problem~\eqref{eq:problem4} is still difficult to solve due to the non-convex objective function and fronthaul constraint~\eqref{eq:problem4:C3}. To deal with this issue, we first approximate~\eqref{eq:problem4:C3} and then reformulate the objective function into an equivalent form by exploiting the
 WMMSE  criterion.  In particular, we equivalently expressed  the indicator function in  discrete constraint $\sum _{k\in \Usm} {1\kern -0.27em 1}\left \{ \left \|\qbar_{mk}\right \|_{2}^{2} \right \} \leq \Kmax$ as an
$\ell_0$-norm of a scalar as
\begin{align}
{1\kern -0.27em 1}\left \{ \left \|\qbar_{mk}\right \|_{2}^{2} \right \} = \left \| \left \|\qbar_{mk}\right \|_{2}^{2} \right \|_{0}, 
\end{align}
where $\ell_0$-norm is the number of nonzero elements in a vector. Then, we use the re-weighted $\ell_1$-norm approximation technique as $\left \| \mathbf {x} \right \|_{0} \approx \sum _{i} \vartheta _{i} |x_{i}|$, where $x_i$ denotes the $i$-th element of vector $x$ and $\vartheta_i$ is the weight associated with $x_i$, to approximate a nonconvex $\ell_0$-norm   by a convex  $\ell_1$-norm.   Therefore, the fronthaul constraint~\eqref{eq:problem4:C3} is reformulated as
\begin{equation}\label{eq:lz}
 \sum\nolimits _{k\in \Usm} \vartheta_{mk} \left \|\qbar_{mk}\right \|_{2}^{2}  \leq \Kmax,
 \end{equation}
where $\vartheta _{mk}$ is a constant weight associated with the $m$-th AP  and the $k$-th user and is updated iteratively based on
\vspace{-0.1em}
\begin{equation}\label{eq:WMMSE_vark}
\vartheta_{mk} = \frac {1}{\left \|\qbar_{mk}\right \|_{2}^{2} + \epsilon },
 \end{equation}
with $\epsilon$ is a small constant regularization factor.  It prevents a zero-valued component in $\left \|\qbar_{mk}\right \|_{2}^{2}$ from strictly blocking the nonzero estimate in subsequent iterations. It is notable that the performance of $\ell_1$-norm approximation in~\eqref{eq:lz} is not highly sensitive to the value of $\epsilon$~\cite{Binbin:Access:2014}.  In addition, 
the  weight update rule~\eqref{eq:WMMSE_vark} is based on the fact that setting $\vartheta_{mk}$ to be inversely related to the transmit power level $\|\qbar_{mk}  \|_{2}^{2}$
ensures that APs with lower power transmission to user 
 $k$ are assigned higher weights. As a result, these APs are pushed to further decrease their transmit power to   user $k$ over consecutive iterations.
 

\emph{1) Problem Transformation:}
Now, we can reformulate the optimization problem
\begin{subequations}\label{eq:problem5}
\begin{align}
\mathop {\mathrm {max\kern 0pt}} _{\left \{\qbar_{mk} | m\in \Csm, k\in \Usm\right \}} & \sum\nolimits _{k\in\Usm} w _{k} \zeta_{k} (\qbar) \label{eq:problem5:C1}
\\
 &\hspace{-3.5em} \mathop {\mathrm {st.\kern 0pt}}\hspace{2em}  \!\!~\sum\nolimits _{k\in \Usm} \left \|\qbar_{mk}\right \|_{2}^{2} \leq \Pb, ~~\forall m\in \Csm, \label{eq:problem5:C2}
\\
& \sum\nolimits _{k\in \Usm} \vartheta_{mk} \left \|\qbar_{mk}\right \|_{2}^{2}  \leq \Kmax,~\forall m\in \Csm, \label{eq:problem5:C3}
\end{align}
\end{subequations}
as an equivalent WMMSE problem and use the block coordinate descent method to reach
a stationary point of~\eqref{eq:problem5}. The equivalence between weighted sum SE maximization and WMMSE for MIMO interfering channels is established in~\cite{Razaviyayn:ITSP:2013}. By adopting a similar methodology to that in~\cite{Razaviyayn:ITSP:2013}, it can be readily observed that the generalized WMMSE equivalence presented in~\cite{Razaviyayn:ITSP:2013} also applies to the problem defined in~\eqref{eq:problem5}, which incorporates the  introduced sum-pseudo-SE  maximization objective and the weighted per-AP power constraint~\eqref{eq:problem5:C3}. More precisely, the traditional WMMSE approach considers the mean square  error (MSE) at each user $k$ as
\vspace{-0.2em}
\begin{align}\label{eq:e_MSE}
\hat{e}_{k} =\Ex\{|\hat{u}_k r_k-x_k|^2\},
\end{align}
where $\hat{u}_k$ is the receiver
weight, and takes the sum of the errors over all the users to get the
final cost function. Now, we take another view point and consider  the modified mean MSE at each user $k$ as $e_{k} =\Ex\{|u_k \bar{r}_k-x_k|^2\}$   where $\bar{r}_k$ is the pseudo-received signal at user $k$, which includes the desired signal, intra-cluster interference, and the leakage interference experienced by the users in $\Unsm$ from APs in $\Csm$
by serving user $k$ as
\vspace{-0.2em}
\begin{align}
    \bar{r}_k=&\underbrace{\sum\nolimits _{m\in \Csm} a_{mk}\sqrt{\eta_{mk}} \qg^{H}_{mk} \qq_{mk}x_{k}}_{\text{Desired signal}}+\nonumber\\
    &\underbrace{\sum\nolimits _{\substack{k'\in \Usm , k' \neq k}} \sum\nolimits _{m\in \Csm}a_{mk'}\sqrt{\eta_{mk'}}\qg^{H}_{mk} {\qq}_{mk'} x_{k'}}_{\text{Intra-cluster interference}}+\nonumber\\
    &\underbrace{\sum\nolimits _{k'\in \mathcal{U}_{-s}} \sum\nolimits _{m\in \Csm}t_{k'}a_{mk}\sqrt{\eta_{mk}}\qg^{H}_{mk'} {\qq}_{mk}\bar{x}_{k'}}_{\text{Leakage interference}},
\end{align}
where $\bar{x}_{k'}$ is the pseudo-information symbol for user $k'$.
The motivation to look for an alternative MSE in~\eqref{eq:e_MSE} is to obtain an equivalent WMMSE-based approach for the maximization of sum-pseudo SE in~\eqref{eq:problem5}. 
The equivalence  is explicitly stated in Proposition~\ref{Prop:WMMSE}. Before  proceeding, let us   introduce the  notations $\qbar_{s,k}$ as a collective cluster-wide $\qbar_{mk}$ vector    from  APs in $\Csm$ to user $k \in \Usm$,  $\qbar_{s,k}\triangleq [\qbar_{mk}: m\in \Csm]$,  and $\qg_{s,k}$   as a collective channel vector    from  APs in $\Csm$ to user $k \in  \Kset$.
\begin{proposition}\label{Prop:WMMSE}
The weighted sum-pseudo-SE maximization problem~\eqref{eq:problem5} has the same  solution as the following WMMSE problem:
\begin{subequations}
  \begin{align} \label{eq:WMMSE1}
\mathop {\mathrm {min\kern 0pt}} _{\left \{\rho _{k}, u_{k}, \qbar_{s,k}|  k\in \Usm\right \}} & \sum\nolimits _{k\in \Usm} w _{k} \left (\rho _{k} e_{k} - \log \rho _{k} \right )
\\ 
&\hspace{-3.7em}\mathop {\mathrm {st.\kern 0pt}} \hspace{2.5em}\!\! \sum\nolimits _{k\in\Usm} \left \|\qbar_{mk}\right \|_{2}^{2} \leq P, ~\quad \forall m \in \Csm, 
\\ 
& \sum\nolimits _{k\in \Usm} \vartheta _{mk} \left \|\qbar_{mk}\right \|_{2}^{2} \leq \Kmax,\quad \forall m \in \Csm,
\end{align}   
\end{subequations}
where $\rho_{k}$ represents the MSE weight for user $k$  and   $e_{k}$  shows the corresponding MSE,  which is given by
\begin{align} \label{eq:e}
e_{k} &=\Ex\{ |u_k\bar{r}_k-x_k |^2\}\nonumber\\
 &= {u}_{k}^{2} \Big ( \sum _{j\in \Usm} \qg_{s,k} \qbar_{s,j} \qbar_{s,j}^{H} \qg_{s,k}^{H} 
+
\sum _{j\in \Unsm} \qg_{s,j} \qbar_{s,k} \qbar_{s,k}^{H} \qg_{s,j}^{H}
 \nonumber\\
&~~+\sigma ^{2}  \Big ) 
- 2 \mathrm{Re} \left \{  {u}_{k} \qg_{s,k} \qbar_{s,k}\right \} + 1. 
\end{align}
\end{proposition}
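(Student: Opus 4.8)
The plan is to establish the equivalence by the standard two-step inner-minimization argument underlying WMMSE derivations~\cite{Razaviyayn:ITSP:2013}, adapted to the \emph{pseudo}-MSE $e_k$ in~\eqref{eq:e} built from the pseudo-received signal $\bar{r}_k$. The crucial point is that the leakage term engineered into $\bar{r}_k$ makes the per-user minimum MSE reproduce exactly the reciprocal of $(1+\SLINRk)$ rather than of $(1+\SINRk)$, so that after the inner minimizations the WMMSE objective collapses to the weighted sum pseudo-SE, up to an additive constant and a $\ln 2$ scaling, while the feasible set in the precoders is left untouched.

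First I would fix the precoders $\{\qbar_{mk}\}$ and minimize over the receiver weights $u_k$, treated as unconstrained. Since $e_k$ is quadratic in $u_k$, the first-order condition yields the MMSE receiver
\[
u_k^{\opt} = \frac{\qg_{s,k}\qbar_{s,k}}{\sum_{j\in\Usm}|\qg_{s,k}\qbar_{s,j}|^2 + \sum_{j\in\Unsm}|\qg_{s,j}\qbar_{s,k}|^2 + \Sn}.
\]
Substituting $u_k^{\opt}$ back and simplifying, the minimum MSE becomes
\[
e_k^{\opt} = 1 - \frac{|\qg_{s,k}\qbar_{s,k}|^2}{\sum_{j\in\Usm}|\qg_{s,k}\qbar_{s,j}|^2 + \sum_{j\in\Unsm}|\qg_{s,j}\qbar_{s,k}|^2 + \Sn} = \frac{1}{1+\SLINRk},
\]
where I recognize the denominator as the sum of the desired, intra-cluster, leakage, and noise contributions and identify the resulting ratio with $\SLINRk$ in~\eqref{eq:SLINRk}.

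Next, with $u_k=u_k^{\opt}$ inserted, I would minimize over the MSE weights $\rho_k>0$. Because $\rho_k e_k^{\opt} - \log\rho_k$ is strictly convex in $\rho_k$, the stationarity condition gives $\rho_k^{\opt} = 1/e_k^{\opt} = 1+\SLINRk$, and back-substitution yields $\rho_k^{\opt} e_k^{\opt} - \log\rho_k^{\opt} = 1 - \log(1+\SLINRk)$. Summing with weights $w_k$, the fully inner-minimized objective of~\eqref{eq:WMMSE1} equals $\sum_{k\in\Usm} w_k - (\ln 2)\sum_{k\in\Usm} w_k\,\zeta_k(\qbar_s)$. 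Since the feasible set for the precoders is identical in~\eqref{eq:problem5} and~\eqref{eq:WMMSE1} (the per-AP power and weighted fronthaul constraints coincide, and $u_k,\rho_k$ are free auxiliary variables), minimizing the WMMSE objective over the joint variables is equivalent to maximizing $\sum_{k\in\Usm} w_k\,\zeta_k$ over $\qbar_s$; the two problems therefore share the same optimal precoders, which is the claim.

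The step I expect to be the main obstacle is verifying the MMSE identity $e_k^{\opt}=1/(1+\SLINRk)$: one must confirm that the leakage contribution $\sum_{j\in\Unsm}|\qg_{s,j}\qbar_{s,k}|^2$ enters the quadratic coefficient of $u_k$ in $e_k$ exactly as the leakage term appears in the SLINR denominator of~\eqref{eq:SLINRk}. This hinges on the specific construction of $\bar{r}_k$, in which the pseudo-symbols $\bar{x}_{k'}$ multiply the precoder $\qq_{mk}$ of the \emph{desired} user rather than of the interferer, so that taking the expectation makes the leakage surface as the $u_k^2$-weighted power of $\qbar_{s,k}$ seen through the cross-channels $\qg_{s,j}$, $j\in\Unsm$. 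Once this bookkeeping is checked, the remaining work is the two routine convex inner minimizations and the standard equivalence conclusion.
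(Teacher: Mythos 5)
Your proposal is correct and follows essentially the same route as the paper, whose proof is omitted with a pointer to the standard WMMSE equivalence argument of Razaviyayn \emph{et al.}: you carry out exactly those steps—inner minimization over $u_k$ giving the MMSE receiver, verification that the resulting minimum pseudo-MSE equals $1/(1+\SLINRk)$ thanks to the leakage term built into $\bar{r}_k$, inner minimization over $\rho_k$, and the observation that the constraints involve only the precoders—so the objective collapses to a constant minus $\ln 2$ times the weighted sum pseudo-SE over an identical feasible set. In effect, you have supplied the details the paper chose to omit, including the key bookkeeping that the leakage enters through the desired user's precoder $\qbar_{s,k}$ seen across the channels $\qg_{s,j}$, $j\in\Unsm$, which is precisely what makes the equivalence hold for the SLINR rather than the SINR.
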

\begin{proof}
    The proofs can be obtained by following similar steps as in~\cite{Razaviyayn:ITSP:2013} for the maximization of weighted sum SE and thus is omitted.
\end{proof}
We highlight that a key advantage of reformulating the sum-pseudo-SE maximization problem~\eqref{eq:problem5} as the equivalent WMMSE problem~\eqref{eq:WMMSE1} is that~\eqref{eq:WMMSE1} exhibits convexity with respect to each individual optimization variable $\rho_k$, $u_k$, and $\qbar_{mk}$ while holding others fixed. This convexity  facilitates the efficient solution of~\eqref{eq:WMMSE1} through the block coordinate descent method, whereby the optimization over $\rho_k$, $u_k$, and $\qbar_{mk}$ is conducted iteratively, as it be outlined as follows:
\begin{itemize}
\item For fixed values of $\qbar_{s,k}$ and $\rho_k$, the objective function in~\eqref{eq:WMMSE1} can be minimized with respect to $u_k$ by setting its first-order derivative to zero, resulting in the MMSE receiver
\begin{align} \label{eq:u}
u_{k}^\opt  &=  \Big (\!\sum _{j\in \Usm}\! \qg_{s,k} \qbar_{s,j} \qbar_{s,j}^{H} \qg_{s,k}^{H} \!+\!\!\!
\sum _{j\in \Unsm} \!\qg_{s,j} \qbar_{s,k} \qbar_{s,k}^{H} \qg_{s,j}^{H} \nonumber\\
&~ ~ ~+ \sigma ^{2}\Big )^{-1} \qg_{s,k} \qbar_{s,k}, ~ \forall k \in \Usm.\quad 
\end{align}

\item  The optimum value for $\rho_{k}$ that minimizes the objective function of~\eqref{eq:WMMSE1} for the fixed $u_k$ and $\qbar_{s,k}$ is obtained by taking the first-order derivative and equating it to zero as
 \vspace{-0.5em}
    \begin{equation}\label{eq:rho}
\rho^\opt_{k} = e_{k}^{-1}, \qquad \forall k \in \Usm.
\end{equation}

\item  As we discuss earlier, the WMMSE method for maximizing the sum-pseudo SE involves iteratively updating one of the three sets of variables $\rho_k$, $u_k$, and $\qbar_{mk}$—while holding the others constant, to approach a local optimum. The update for $u_k$ has been detailed in~\eqref{eq:u}, and the update for $\rho_k$ in~\eqref{eq:rho}. The next step is to determine the optimal values for $\qbar_{mk}$, given the current values of $u_k$ and $\rho_k$. To do this, we substitute the expression for $e_k$ from~\eqref{eq:e} into the objective function in~\eqref{eq:WMMSE1}. The resulting optimization problem  for finding the optimal transmit beamformer $\bar{\qq}_{mk}$ 
is a quadratically constrained quadratic programming (QCQP) problem as 
 \begin{subequations}
    \begin{align}\label{eq:WMMSE2}
  \mathop {\mathrm {min\kern 0pt}} _{ \{\qbar_{mk}| m\in \Csm, k\in \Usm \}} &\sum _{k\in \Usm}\!\! \qbar_{s,k}^{H} \Big ( \sum _{j\in \Usm} w _{j} \rho _{j} {u}_{j}^2\qg_{s,j}^{H}   \qg_{s,j}
 \!+\!w _{k} \rho _{k}\nonumber\\
&\hspace{-6.5em}\times \sum _{j\in \Unsm} \!\!\!u_{k}^2 \qg_{s,j}^{H}   \qg_{s,j}\Big ) \qbar_{s,k}
 - \!\!2 \sum _{k\in \Usm} \!\!w _{k} \rho _{k} \mathrm{Re} \left \{ u_{k} \qg_{s,k}^H \qbar_{s,k}\right \} 
 \\
 &\hspace{-3.5em}{\mathrm {st.\kern 0pt}}\hspace{0.7em}\sum\nolimits _{k\in \Usm} \left \|\bar{\qq}_{mk}\right \|_{2}^{2} \leq P, ~~\forall m\in \Csm 
 \\
&\hspace{-1.5em}\sum\nolimits _{k\in \Usm} \vartheta _{mk} \left \|\bar{\qq}_{mk}\right \|_{2}^{2} \leq \Kmax, \forall m \in \Csm,
\end{align}  
 \end{subequations}
which can be addressed using a standard convex optimization tool like CVX.
\end{itemize}
The solution of~\eqref{eq:WMMSE1}  is
summarized in  \textbf{Algorithm \ref{Alg:WMMSE1}}.\footnote{While this work employs a WMMSE-based optimization approach, using alternative optimization methods such as fractional programming  or  alternating direction method of multipliers (ADMM) approach for solving the QCQP subproblem  could be explored in future research.}

\begin{algorithm}[t]\label{Alg:WMMSE1}
\caption{Instantaneous CSI-Based Sum-Pseudo-SE Maximization With Modified WMMSE Approach at PC $\Csm$}\label{Alg:WMMSE1}
\begin{algorithmic}[1]
\State \textbf{Initialize}: $\vartheta_{mk}^{(0)}$, $\qbar_{mk}^{(0)}, \forall k \in \mathcal{U}_s, \forall m \in \mathcal{C}_s$, iteration index $i = 0$, convergency
accuracy $\xi$.
\While  {$\frac{\big|\sum_{k \in \mathcal{U}_s}w_k \zeta_k^{(i)}-\sum_{k \in \mathcal{U}_s} w_k \zeta_k^{(i-1)}\big|} {\sum_{k \in \mathcal{U}_s} w_k \zeta_k^{(i-1)}}<\xi$}
\State $i = i + 1$;
\State Calculate $u_k^{(i)}$  according to \eqref{eq:u} with $\qbar_{mk}$ fixed, $k\in \Usm$
\State Calculate $e_k^{(i)}$ according to \eqref{eq:e} with $\qbar_{mk}$ and $u_k$ fixed, $k\in \Usm$. 
\State  Update $\rho_k$ according to \eqref{eq:rho}.
\State Calculate the optimal transmit beamformer $\qbar_{mk}$ with  $\qu_k$ and $\rho_k$ fixed $\forall k\in \Usm, m \in \Csm$, by solving the problem~\eqref{eq:WMMSE2}.
\State Calculate $\vartheta_{mk}^{(i)}$ according to \eqref{eq:WMMSE_vark}.
 \EndWhile
\end{algorithmic}
\end{algorithm}

\section{Statistical CSI-Based Cluster-Wise Sum Hardening-Based Pseudo-SE Maximization}
In this section, each AP $m$ uses the channel estimates $\hat{\qg}_{mk}$  in Section~\ref{sec:uplinktraining} to precode the information signals before transmitting them to its assigned users.
Moreover, each  user $k$ relies only on statistical CSI to detect $x_k$ from the received signal in~\eqref{eq:received_r}.  This eliminates the need for users to know the instantaneous channel estimates, which in turn reduces the amount of information that must be exchanged.  
Accordingly, by utilizing the widely adopted hardening bounding technique~\cite{hien:2017:wcom}, the achievable SE for user $k$ can be expressed in closed form as~\cite{  Emil2021JCS}:
\begin{align}\label{eq:perfect:SE_Stat}
\tilde{R}_k=\log_2 (1+\widetilde{\SINRk}),
\end{align}
where
\vspace{-0.5em}
\begin{equation}\label{eq_emil}
    \begin{aligned}
        \widetilde{\SINRk} =\frac{(\qd_k^T\qp_k)^2}{\sum\nolimits_{j \in \Kset} \qp_j^T \qB_{kj} \qp_j-(\qd_k^T\qp_k)^2+\sigma^2},
    \end{aligned}
\end{equation}
with
\vspace{-0.5em}
\begin{itemize}
 \item 
 $\qd_k = \big[d_{1k}, \cdots, d_{Mk}\big]^T \in \mathbb{R}^{M\times 1}$, 
with
$d_{mk}=\big|\mathbb{E}\{{\qg}_{m k}^H\qq_{m k}\}{\big|},  m \in \Mset \in \mathbb{R}^{M\times 1}$ . 
\item $\qp_k = [p_{1k}, \cdots, p_{Mk}]^T, $
with $p_{mk} =  a_{m k}\sqrt{\eta_{m k}},  m \in \Mset$.
\item $\qB_{kj} \in \mathbb{R}^{M \times M},$ with $[\qB_{kj}]_{l m} = \mathrm{Re}(\mathbb{E}\{\qg_{l k}^H \qq_{l j} \qq_{mj}^H\qg_{mk}\}),  l, m \in \Mset$.
\end{itemize}
In addition, a closed-form expression for the hardening-based pseudo-SE at user $k$ can be written as
\begin{align}\label{eq:SLINR_stat}
\tilde{\zeta}_{k}=\log _{2} (1+\widetilde{\SLINRk}),
\end{align}
where
\begin{equation}\label{eq_emil}
    \begin{aligned}
        \widetilde{\SLINRk} =\frac{(\tilde{\qd}_k^T\tilde{\qp}_k)^2}{\sum\limits_{j \in \Usm} \!\!\tilde{\qp}_j^T \tilde{\qB}_{kj} \tilde{\qp}_j\!+\!\!\sum\limits_{j \in \Unsm}\!\! \tilde{\qp}_k^T \tilde{\qF}_{kj} \tilde{\qp}_k-(\tilde{\qd}_k^T\tilde{\qp}_k)^2+\!\sigma^2},
    \end{aligned}
\end{equation}
\begin{itemize}
 \item 
 $\tilde{\qd}_k = \big[\tilde{d}_{1k}, \cdots, \tilde{d}_{|\Csm|k}\big]^T  \in \mathbb{R}^{|\Csm|\times 1}$,
with
$\tilde{d}_{ik}=\big|\mathbb{E}\{{\qg}_{\ell k}^H\qq_{\ell k}\}{\big|},  \ell=\Csm\{i\}$, $\forall i \in \{1, \cdots, |\Csm|\}$. 
\item $\tilde{\qp}_k = [\tilde{p}_{1k}, \cdots, \tilde{p}_{|\Csm|k}]^T \in \mathbb{R}^{|\Csm|\times 1}, $
with $\tilde{p}_{ik} =  a_{\ell k}\sqrt{\eta_{\ell k}},  \ell=\Csm\{i\}$, $\forall i \in \{1, \cdots, |\Csm|\}$. 
\item $\tilde{\qB}_{kj} \in \mathbb{R}^{|\Csm|\times |\Csm|},$ with $[\tilde{\qB}_{kj}]_{l m} = \mathrm{Re}(\mathbb{E}\{\qg_{\ell k}^H \qq_{\ell j} \qq_{oj}^H\qg_{ok}\}),  \ell=\Csm\{l\}, o=\Csm\{m\}$, $\forall l, m \in \{1, \cdots, |\Csm|\}$. 
\item $\tilde{\qF}_{kj} \in \mathbb{R}^{|\Csm|\times |\Csm|},$ with $[\tilde{\qF}_{kj}]_{l m} = \mathrm{Re}(\mathbb{E}\{\qg_{\ell j}^H \qq_{\ell k} \qq_{ok}^H\qg_{oj}\}),  \ell=\Csm\{l\}, o=\Csm\{m\}$, $\forall l, m \in \{1, \cdots, |\Csm|\}$.
\end{itemize}

\vspace{-0.5em}
\subsection{Problem Formulation and Modified WMMSE-Based Solution}
We now propose a statistical CSI-based resource allocation design for the given precoding. In this approach, resource allocation is updated according to the large-scale fading time scale (statistical channel properties).
For the optimization, any precoding design can be utilized, but in the simulation results we will consider cluster-wise MMSE precoding scheme.
Let $\widehat{\qG}_{s}$ be an $L|\Csm| \times |\Usm|$ collective channel estimation matrix for corresponding APs in set  $ \Csm$. More specifically, $\widehat{\qG}_{s}$ consists of  $|\Csm|\times|\Usm|$ vectors of dimension $L\times 1$, $\hat {\qg}_{ij}$, each corresponding to a particular AP $i$ in set  $ \Csm$ and user $j$ in set  $\Usm$ as
\begin{equation}\label{eq:Qij}
\widehat{\qG}_s=[\hat {\qg}_{ij}: i \in \Csm, j \in \Usm].
\end{equation}
By  using  MMSE precoding scheme the whole precoding vector constructed for the APs in $\Csm$ can be expressed as
\begin{equation}\label{eq:Q}
\widetilde{\qQ}_{s}=
\widehat{\qG}_{s} \left ({(\widetilde{\qP}_{s}\circ\widehat{\qG}_s)^{H}\widehat{\qG}_s }+\sigma^2 \qI_{|\Usm|}\right)^{-1},
\end{equation}
where, $\widetilde{\qP}_{s}=[\tilde {p}_{ij} \textbf{1}_{L\times 1}: i \in \Csm, j \in \Usm]$,  $\circ$ is the
Hadamard or entry-wise product.
Now, each BBH  construct its precoding vector for transmission to  user $k$ by choosing the corresponding  column vector.
In particular, let AP $m$ correspond to the  $i$-th element of set $\Csm$,  $i\in \{1,\cdots,|\Csm|\}$, and user $k$ correspond to the $j$-th element of set $\Usm$,  $j\in \{1,\cdots,|\Usm|\}$.  Accordingly, the downlink precoding vector constructed for the $m$-th AP to the $k$-th user can be calculated as the vector of $\tilde{\qQ}_{s}$ obtained by selecting the $L$ rows $a_{mk}$ to $\check{a}_{mk}$ from $j$-th column as $\qq_{mk}=\frac{\tilde{\qq}_{mk}}{\|\tilde{\qq}_{mk}\|_2}$ with
\vspace{-0.5em}
\begin{align}\label{eq:Qmk}
\tilde{\qq}_{mk}&= \big[{\widetilde{\qQ}_{s}}\big]_{(a_{mk}: \check{a}_{mk}, j)},
\end{align}
where   $a_{mk}=(i-1)\times L+1$ and $\check{a}_{mk}=a_{mk}+ L-1$.

Now, for  PC $\Csm$ we formulate the following hardening-based sum pseudo-SE maximization  problem
\begin{subequations}\label{eq:problemstat1}
\begin{align}
\max _{\boldsymbol {\eta}_s, \aset} &\sum\nolimits_{k \in \Usm} \tilde{w}_k  \tilde{\zeta}_{k}(\boldsymbol {\eta }_s,\aset) \label{eq:problemstat1:O1}
\\
&\hspace{-1em}\text {st.}\hspace{1em}\eta _{mk}\geq 0,\hspace{2em} \forall m\in \mathcal{C}_s,~k\in\mathcal{U}_s, \label{eq:problemstat1:C1} 
\\
&\hphantom {\text {st.}} \sum\nolimits_{k \in \Usm} a_{mk}\leq \Kmax, \quad \forall m\in \mathcal{C}_s, \label{eq:problemstat1:C3} 
\\
&\hphantom {\text {st.}}  \sum\nolimits _{k\in \Usm} \tilde{p}_{mk}^{2} \leq \Pb, ~\quad  \forall m\in \mathcal{C}_s. \label{eq:problemstat1:C4}
\end{align}
\end{subequations}
where $\tilde{w}_k$ presents the priority weight associated with user $k$.
Following the same approach as in the previous section, we can formulate the equivalent WMMSE problem to optimize power control and user association as
\begin{subequations}
   \begin{align} \label{eq:WMMSEstat2}
\mathop {\mathrm {min\kern 0pt}} _{\left \{\tilde{\rho} _{k}, u_{k}, \tilde{\qp}_{k}|  k\in \Usm\right \}} &\sum\nolimits _{k\in \Usm} \tilde{w}_{k} \left (\tilde{\rho} _{k} \tilde{e} _{k} - \log \tilde{\rho} _{k} \right ) 
\\ 
&\hspace{-2em}~ \mathop {\mathrm {st.\kern 0pt}}\nolimits \quad\!\! \sum\nolimits _{k\in\Usm}  \tilde{p}_{mk}^{2} \leq \Pb, ~\quad \forall m \in \Csm,
\\ 
& \sum\nolimits _{k\in \Usm} \tilde{\vartheta} _{mk}  \tilde{p}_{mk}^{2} \leq \Kmax, ~\forall m \in \Csm,
\end{align}  
\end{subequations}
where 
\begin{align}\label{eq:varho_stat}
        \tilde{\vartheta} _{mk} = \big({\tilde{p}_{mk}^2+ \epsilon}\big)^{-1},
\end{align}
and
\begin{align}\label{eq:e_stat}
    \tilde{e}_k = &  \tilde{u}_k ^2\bigg(\sum\nolimits_{j \in \Usm} \!\!\tilde{\qp}_j^T \tilde{\qB}_{kj} \tilde{\qp}_j\!+\!\sum\nolimits_{j \in \Unsm}\!\! \tilde{\qp}_k^T \tilde{\qF}_{kj} \tilde{\qp}_k +
    \sigma^2\bigg) \nonumber\\
    &
    - 2\tilde{u}_{k} \tilde{\qd}_k^T\tilde{\qp}_{k} + 1. 
    \end{align}

For fixed $\tilde{\qp}_{k}$ and $\tilde{\rho}_{k}$, the objective function in~\eqref{eq:WMMSEstat2} can be minimized with respect to $\tilde{u}_k$ by setting its first-order derivative to zero, resulting in
\begin{equation}
\label{u_stat}
    \tilde{u}_k^\opt = \frac{\tilde{\qd}_k^T\tilde{\qp}_{k}}{\sum\nolimits_{j \in \Usm} \!\!\tilde{\qp}_j^T \tilde{\qB}_{kj} \tilde{\qp}_j\!+\!\sum\nolimits_{j \in \Unsm}\!\! \tilde{\qp}_k^T \tilde{\qF}_{kj} \tilde{\qp}_k+\sigma^2}.
\end{equation}
Accordingly, for fixed $\tilde{u}_{k},\tilde{\qp}_{k}$, the optimal MSE weight is
\begin{align}
\label{eq:rho_stat}
    \tilde{\rho}^\opt_{k} = (\tilde{e}_k)^{-1}.
\end{align}
 Finally, with fixed $\tilde{u}_k$, $\tilde{\rho}_k$, we obtain a QCQP problem with respect to $\tilde{\qp}_{k}$ as
 \begin{subequations}
      \begin{align}\label{eq:WMMSEstat3}
 \mathop {\mathrm {min\kern 0pt}} ~  &\tilde{f} (\tilde{\qp})
 \\
 &\hspace{-1.5em} {\mathrm {st.\kern 0pt}}\hspace{0.5em} \sum\nolimits _{k\in\Usm}  \tilde{p}_{mk}^{2} \leq \Pb, ~\quad \forall m \in \Csm,
 \\
& \sum\nolimits _{k\in \Usm} \tilde{\vartheta} _{mk}  \tilde{p}_{mk}^{2} \leq \Kmax, ~\quad \forall m \in \Csm,
\end{align} 
 \end{subequations}
where 
\begin{align}
    \tilde{f}(\tilde{\qp}) =& \sum_{k \in \Usm}\tilde{w}_k\tilde{\rho}_k\tilde{u}_k^2\bigg(\sum_{j \in \Usm} \!\!\tilde{\qp}_j^T \tilde{\qB}_{kj} \tilde{\qp}_j\!+\!\sum_{j \in \Unsm}\!\! \tilde{\qp}_k^T \tilde{\qF}_{kj} \tilde{\qp}_k +
    \sigma^2\bigg)\nonumber\\
    &- 2\tilde{w}_k\tilde{\rho}_k\tilde{u}_{k} \tilde{\qd}_k^T\tilde{\qp}_{k}
\end{align}
and $\tilde{\qp}\triangleq \{\tilde{\qp}_{k}|  k\in \Usm\}$. 

Both the objective function and the constraints in \eqref{eq:WMMSEstat3} are convex and standard convex optimization tools can be used to solve it. The solution of~\eqref{eq:WMMSEstat3}  is
shown in  \textbf{Algorithm \ref{Alg:WMMSE2}}.

\begin{algorithm}[t]\label{Alg:WMMSE2}
\caption{Statistical CSI-Based  Sum-Pseudo-SE Maximization With Modified WMMSE Approach at PC $\Csm$}\label{Alg:WMMSE2}
\begin{algorithmic}[1]
\State \textbf{Initialize}: $\tilde{\vartheta}_{mk}^{(0)}$, $\tilde{\qp}_{k}^{(0)}, \forall k \in \mathcal{U}_s, \forall m \in \mathcal{C}_s$, iteration index $i = 0$, convergency
accuracy $\xi$.
\While  {$\frac{\big|\sum_{k \in \mathcal{U}_s} \tilde{w}_k  \tilde{\zeta}_{k}^{(i)}-\sum_{k \in \mathcal{U}_s}  \tilde{w}_k  \tilde{\zeta}_{k}^{(i-1)}\big|} {\sum_{k \in \mathcal{U}_s}  \tilde{w}_k  \tilde{\zeta}_{k}^{(i-1)}}<\xi$}
\State $i = i + 1$;
\State Calculate $\tilde{u}_k^{(i)}$  according to \eqref{u_stat} with $\tilde{\qp}_{k}$ fixed, $k\in \Usm$
\State Calculate $\tilde{e}_k^{(i)}$ according to \eqref{eq:e_stat} with $\tilde{\qp}_{k}$ and $\tilde{u}_k$ fixed, $k\in \Usm$.  
\State   Update $\tilde{\rho}_k$  according to \eqref{eq:rho_stat}.
\State Calculate the optimal power control $\tilde{\qp}_{k}$ with  $\tilde{u}_k$ and $\tilde{\rho}_k$ fixed $\forall k\in \Usm, m \in \Csm$, by solving the problem~\eqref{eq:WMMSEstat3}
\State Calculate $\tilde{\vartheta}_{mk}^{(i)}$ according to \eqref{eq:varho_stat}
 \EndWhile
\end{algorithmic}
\end{algorithm}



\vspace{-0.5em}
\section{Numerical Results} \label{Sec:numerical}
We consider a CF-mMIMO system where the APs and users are randomly distributed within a $1 \times 1$ km${}^2$, with wrapped-around edges to eliminate boundary effects. In cluster-wise processing, the available APs are divided into $S$ disjoint PCs, which can be based on factors such as AP location and interference relationships. 
In our simulations, we group APs into PCs based on their geographical locations. Moreover, for each user $k$, we choose a  cluster $\Csm$ comprising   the APs providing the highest sum average received power at user $k$. 
In particular, we select cluster $\Csm$ for user $k$ with 
\begin{align}
s =   \underset{s \in \{1,\cdots, S\}}\arg \max \sum\nolimits_{m \in \Csm} \beta_{mk}.
\end{align}
Each AP can serve up to $K_{\mathrm{max}}$ users out of a set of $K$ users. The value of parameter $K_{\mathrm{max}}$  depends on the system and fronthaul parameters and is determined based on~\eqref{eq:Kmax}.  
We evaluate the performance of  CF-mMIMO system with cluster-wise processing   under limited fronthaul capacity   and   transmit power constraint  relying on our proposed   Algorithm 1 and Algorithm 2. We compare the performance of the following cases:
\begin{itemize}
\item Network-wide (centralized) processing with $S=1$:   In this case, there is one PC consisting of all the APs in the network. The performance of this case can be considered the fronthaul-limited upper bound.

\item Cluster-wise (decentralized) processing with  $S>1$ PCs: In this case, APs are divided into $S \in \{2, 4\}$ PCs. In each PC, we resort to the proposed instantaneous CSI-based Algorithm 1 (statistical CSI-based Algorithm 2) to design precoding, power allocation, and user association (power allocation and user association).
\end{itemize}

\vspace{-0.5em}
\subsection{ Parameters and Setup}
The maximum transmit power for training  pilot sequences and for the transmit power at each AP is set to $100$ mW. The noise power is $\sigma^2_n=-92$ dBm, while the  fronthaul parameters are chosen based on  Table~\ref{tab:FronthaulParameter}.  In addition,  we consider $100$ MHz bandwidth with $30$ kHz subcarrier spacing which corresponds to $N_{\mathrm{sub}}=3264$  as in Table~\ref{tab:FronthaulParameter}. In addition, we set $\tauu = 2000$ samples, which corresponds to a coherence bandwidth of $200$ KHz and a coherence time of $10$ ms.
The large-scale fading  and the path-loss between AP $m$ and user $k$
is modeled as
\begin{equation} \label{eq:Beta}
\beta_{mk} =\text{PL}_{mk}  10^{\frac {\sigma _{sh} ~y_{mk}}{10}}, 
\end{equation}
where $\text{PL}_{mk}$ denotes the path loss and 
$10^{\frac {\sigma _{sh} ~y_{mk}}{10}}$ represents the shadow fading  with standard deviation $\sigma_{sh} = 4$ dB and $y_{mk} \sim \mathcal{CN}(0, 1)$. To model $\text{PL}_{mk}$, we consider the popular three-slope model
as described in\cite{hien:2017:wcom}.   
\begin{figure*}\label{fig1}
\centering 
\begin{minipage}{.48\textwidth} 
\centering 
\includegraphics[width=0.9\linewidth]{  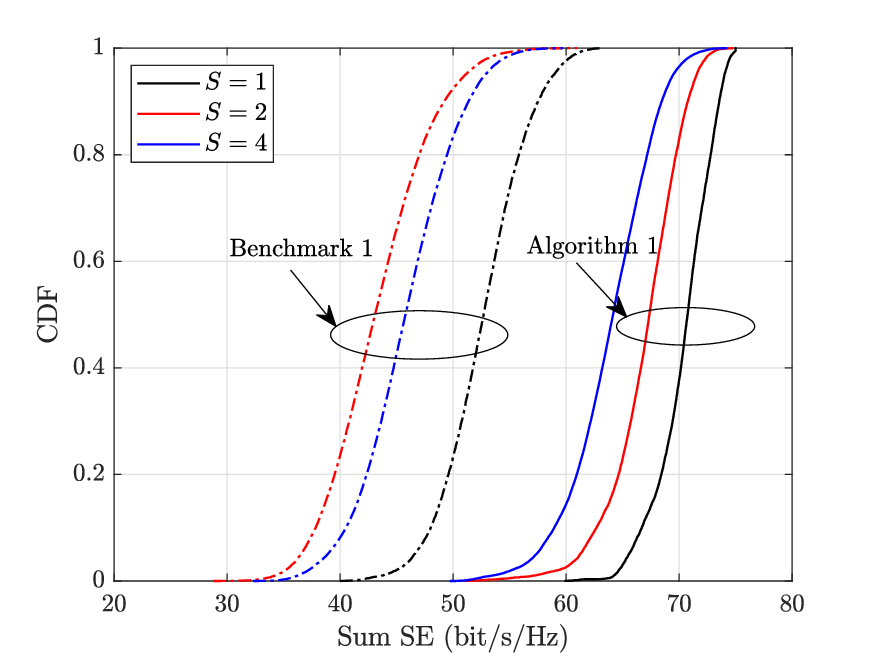}
\vspace{-0.5em}
\subcaption[a]{Instantaneous CSI-based design Algorithm 1}
\label{fig1_a} 
\end{minipage}%
\begin{minipage}{.48\textwidth} 
\centering \includegraphics[width=0.9\linewidth]{  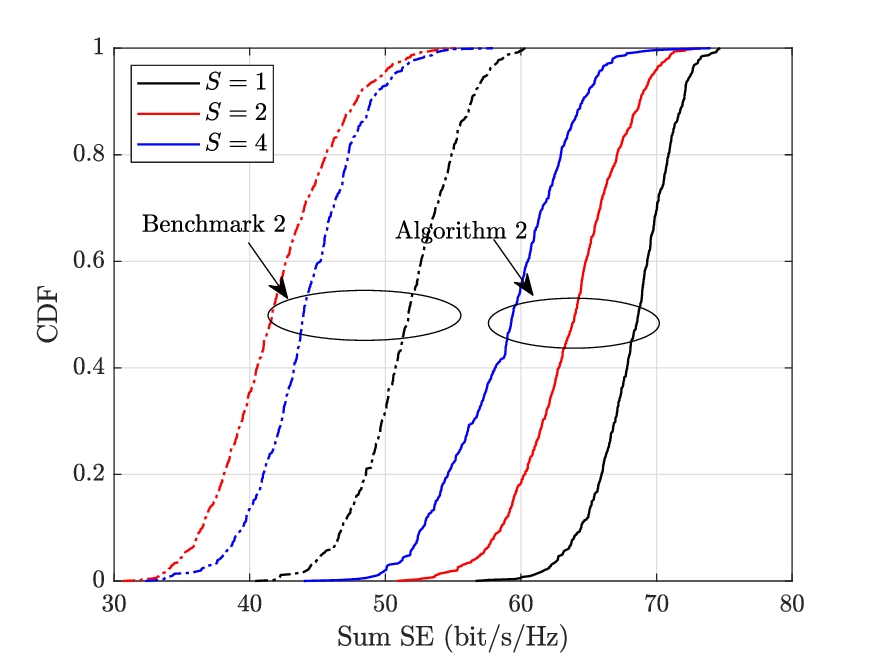}
\vspace{-0.5em}
\subcaption[b]{Statistical CSI-based design Algorithm 2}
\label{fig1_b}
\end{minipage} 
\caption{  Comparison among the sum-SE achieved by the proposed
Algorithms and benchmark schemes, where $L=24$, $K=15$,  $M=10$, $\mathrm{FH_{max}}=10$ Gbps, and $M_{\mathrm{mo}}=32$.
}\label{fig1}
\vspace{-1.2em}
\end{figure*}

\begin{table}[t]
	\centering
\caption{  Fronthaul Parameters} 
 \begin{tabular}{||c c c c ||}
 \hline
 Parameter & Value & Parameter & Value \\ [0.5ex]
 \hline\hline
 $N_{\mathrm{sub}}$ & 3264  &    {$N_{\mathrm{Gran}}$}       &136  \\
 \hline
$\varepsilon_\mathrm{{cp}}$ & 0.85      &$\delta_{\mathrm{pr}} (\delta _\mathrm{da})$    & $2\times10^{-4}$($5\times10^{-4}$) s  \\
 \hline
$N_{\mathrm{o}}$ & 14 sym &$N_{\mathrm{bits}}$ & 16 \\
 \hline
\end{tabular}
\label{tab:FronthaulParameter}
\end{table}

\subsection{Results and Discussions}
 \subsubsection{ Performance of the Proposed   Cluster-wise  Weighted Pseudo-SE-Maximization Approaches}

In Figs.~\ref{fig1_a} and~\ref{fig1_b}   we evaluate the performance of
the proposed cluster-wise  pseudo-SE-maximization approaches in Algorithm 1 and Algorithm 2 in the fronthaul-aware   CF-mMIMO system, respectively. We consider the following benchmark schemes for comparisons:
\begin{itemize}
\item Benchmark 1: In this scheme, we consider successive instantaneous CSI-based cluster-wise processing design, where user association, precoding, and  power allocation  are successively implemented over the
small-scale fading time scale. In particular, for user association,  each AP sorts the instantaneous channel gains in descending order and independently selects $\Kmax$ users with the strongest channel gains.For the given user association,  the power control coefficients are determined based on the low-complexity heuristic scheme proposed in~\cite{nayebi:2017:wcom}, which has shown excellent performance and serves as a reliable baseline. Using this scheme, the power coefficient used by AP $m$ for transmission to user $k$ is calculated as
 \begin{equation}~\label{EPA:Benchmark}
\eta_{mk}=\frac{1}{\max_m\left(\sum_{k \in \Kset}a_{mk}\right)}, \quad m \in \Csm,~k\in \Usm. 
\end{equation}

In addition, precoding vectors are designed based on   cluster-wise MMSE scheme.   
\item Benchmark 2: In this scheme, we consider successive statistical CSI-based cluster-wise processing design, where user association  and  power allocation are successively implemented over the
large-scale fading time scale. In particular, for user association,  each AP $m$ sorts the large-scale fading coefficients, $\beta_{mk}$, in descending order and independently selects $\Kmax$ users with the strongest channel gains.  For power control, given the user association, we use heuristic scheme~\cite{nayebi:2017:wcom}, while precoding vectors are designed based on~\eqref{eq:Qmk} for the given user association and power control.
\end{itemize}

The main observations that follow from these simulations are
as follows:
\begin{itemize}
\item The  proposed cluster-wise processing solutions enhance the system performance significantly for both the statistical and instantaneous CSI-based designs. In particular, when $S=2$, the joint precoding, power allocation, and user association Algorithm 1 provides a performance gain of up to $56\%$ compared to \emph{Benchmark 1}. Meanwhile, the statistical CSI-based design  in Algorithm 2  yields a performance gain of up to $52\%$ compared to \emph{Benchmark 2}. This highlights the advantage of our proposed solutions over the heuristic benchmarks.
 
\item The performance gap between network-wide CF-mMIMO with $S=1$ and CF-mMIMO with cluster-wise processing,  decreases with our proposed solutions in Algorithm 1 and Algorithm 2. More precisely,  the performance loss of CF-mMIMO with statistical CSI-based cluster-wise  processing compared  to the  centralized  case is around $7\%$ and $16\%$ when $S=2$ and $S=4$, respectively. These losses reduce to  $4\%$ and $9\%$, for instantaneous CSI-based design, respectively. 
This is an interesting result because it shows the
importance of deploying multiple PCs in the CF-mMIMO system.
\end{itemize}

Figure~\ref{fig:Fig1c} compares the performance of the joint optimization approach in Algorithm 1 for the CF-mMIMO system with $S=2$ PCs against cases where only the power allocation coefficients or both power allocation and user association variables are optimized, denoted by OPA and OPA-OUA, respectively. It is observed that OPA yields a $28\%$ performance gain over Benchmark 1, and OPA-OUA provides a $15\%$ improvement over OPA by optimizing both power allocation and user association. Also, Algorithm 1 achieves an additional $8\%$ gain over OPA-OUA by jointly optimizing user association, power allocation, and precoding. This demonstrates that the integrated optimization approach in Algorithm 1 significantly outperforms the individual optimizations.


\begin{figure}[t]
	\centering
	\includegraphics[width=0.43\textwidth]{  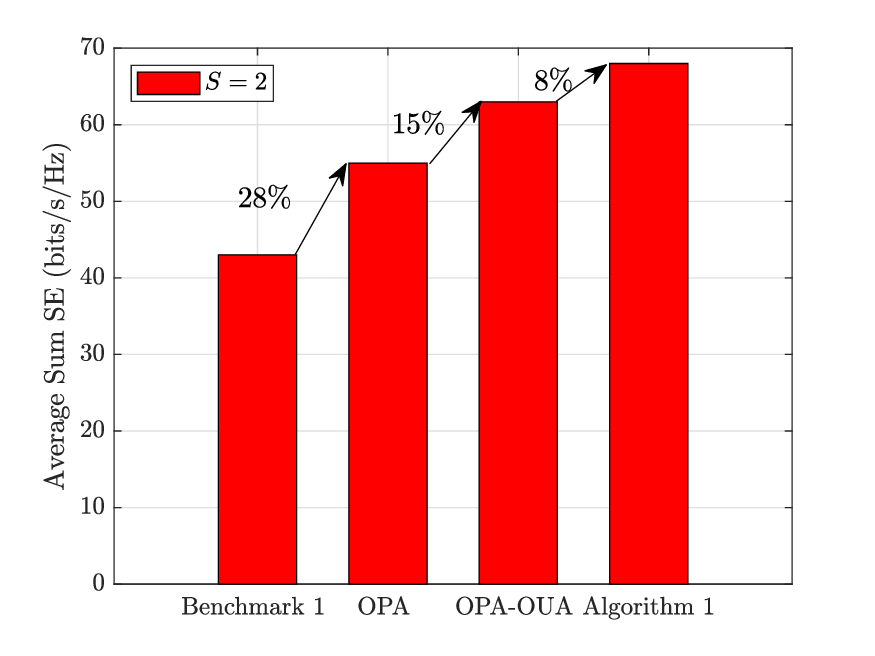}
	\vspace{-1em}
	\caption{Comparison among the sum-SE achieved by different optimization approaches, where $L=24$, $K=15$,  $M=10$, $\mathrm{FH_{max}}=10$ Gbps, and $M_{\mathrm{mo}}=32$.  }
	\vspace{-0.2em}
	\label{fig:Fig1c}
\end{figure}

\subsubsection{ Impact of the Number of Antennas per AP} Figure~\ref{fig:Fig2} presents the average sum-SE performance of the fronthaul-limited CF-mMIMO system with cluster-wise processing as a function of the number of transmit antennas at the AP. The main insights drawn from this figure are as follows.
\begin{itemize}
\item Increasing  the number of transmit antennas at each AP impacts the sum-SE performance in two ways: (i) it boosts diversity and array gain, and (ii) reduces $K_{\mathrm{max}}$ due to fronthaul limitations~\eqref{eq:Kmax}. However, the first effect dominates and results in a notable enhancement in sum-SE performance, especially for the statistical CSI-based design with higher number of PCs. 

\item It is evident that for a large number of transmit antennas, CF-mMIMO employing 
 scalable cluster-wise processing along with the proposed WMMSE-based solutions incur only a minor  performance loss for decentralization, e.g., around $10\%$ and $12\%$ for instantaneous and statistical CSI-based designs with $S=4$ PCs, respectively. This is achieved while significantly reducing the computational complexity of cluster-wise processing and the overhead required for CSI acquisition across large processing sets. These results emphasize the benefit of our proposed scalable cluster-wise processing, namely minimizing the need for extensive network-wide processing and coordination among numerous APs, while still providing competitive performance.
 \item Herein, we also present the results for
CF-mMIMO having infinite fronthaul capacity (represented as infinite FH) relying on the proposed WMMSE-based algorithms when $S=1$, $K_{\mathrm{max}}=K$,  and there is no post-imposing fronthaul constraint~\eqref{eq:FHMorder}, i.e.,  $R_{k, \mathrm{post}} = R_{k}$. It is observed that the fronthaul constraints lead to the performance loss, which is more pronounced for the CF-mMIMO systems in the regime of large values of $L$. Nevertheless,  our proposed  optimization solutions in Algorithm 1 and 2  could potentially make  CF-mMIMO with cluster-wise processing  competitive
compared to network-wide CF-mMIMO with infinite fronthaul capacity.
\end{itemize}

\begin{figure}[t]
	\centering
	\includegraphics[width=0.43\textwidth]{  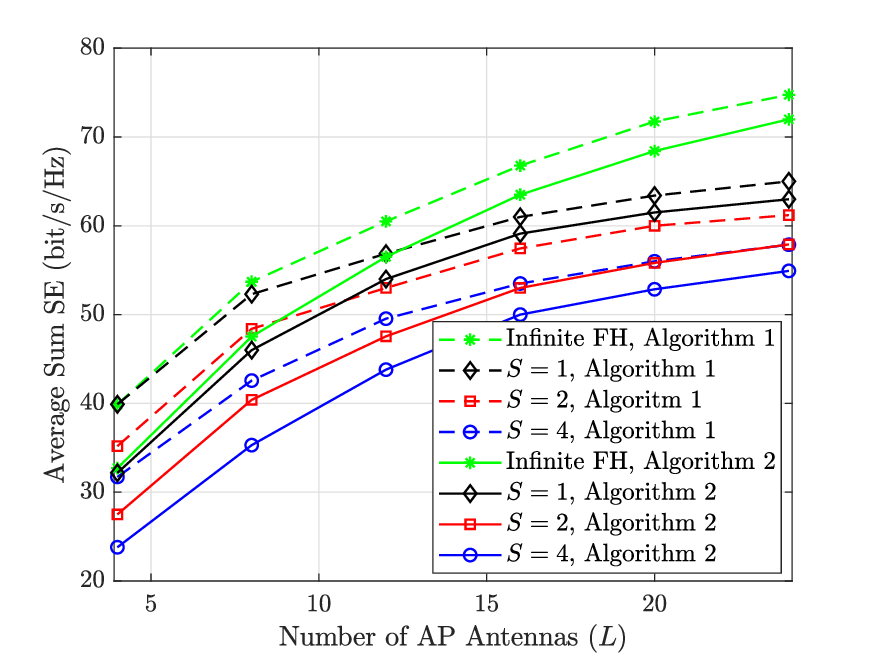}
	\vspace{-0.5em}
	\caption{Average sum SE  versus number of AP antennas, $L$, where   $K=15$,  $M=8$, $\mathrm{FH_{max}}=10$ Gbps, and $M_{\mathrm{mo}}=32$.  }
	\vspace{-0.5em}
	\label{fig:Fig2}
\end{figure}

\begin{figure}[t]
	\centering
	\includegraphics[width=0.43\textwidth]{  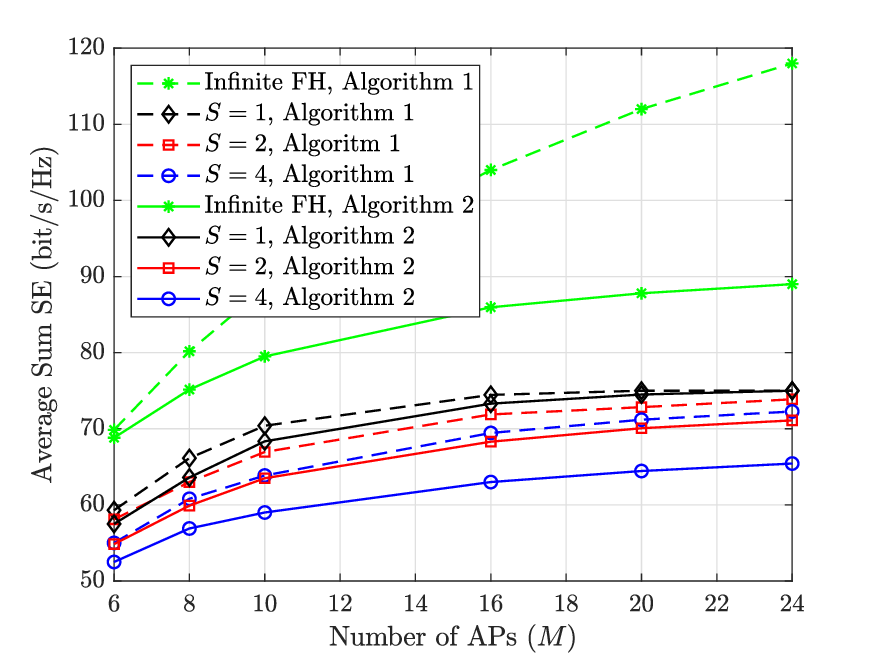}
	\vspace{-0.4em}
	\caption{Average sum SE  versus number of APs, $M$, where $ML=240$,   $K=15$,    $\mathrm{FH_{max}}=10$ Gbps, and $M_{\mathrm{mo}}=32$.}
    \vspace{-0.2em}
	\label{fig:Fig3}
\end{figure}
\subsubsection{ Impact of the Number of APs}
Figure~\ref{fig:Fig3} presents the  average sum SE achieved by CF-mMIMO system  for  different numbers of   APs for systems having the same total numbers of service antennas, i.e., $L M = 240$, but different
number of APs. The main observations that follow from these simulations are as follows.
\begin{itemize}
    \item For all the schemes, distributing antennas  results in better sum-SE performance due to the additional macro diversity gain.

    \item   It can be observed that, as the number of APs increases, the sum-SE performance of the fronthaul-limited CF-mMIMO system remains relatively unchanged when $M \geq 20$,    particularly for the $S=1$ and $S=2$ cases. This behaviour contrasts with the infinite fronthaul scenario, where the SE increases notably with $M$. The limited improvement in the fronthaul-limited case is due to the fronthaul constraint in~\eqref{eq:FHMorder}, which doesn't allow the SE to increase more than $\log_2(M_{\mathrm{mo}})$. As a result, the fronthaul bottleneck prevents further SE gains as $M$ increases. 
    For this regime the performance gap between network-wise processing and  cluster-wise processing under fronthaul constraint significantly reduces and cluster-wise processing is undoubtedly a better choice.
 
    \item  The performance of CF-mMIMO system with cluster-wise processing relying on statistical CSI-based design Algorithm 2 is fairly close to CF-mMIMO system relying on instantaneous
CSI-based design   in Algorithm 1 for the large to medium range of number of antennas for different number of PCs. For example, when  $L = 40$  or equivalently $M=6$, the performance gap between  statistical CSI-based and instantaneous CSI-based design is less than  $5\%$ and $7\%$ when there is $S=2$ PCs, and $S=4$ PCs, respectively. This behaviour follows  from the fact that the level of channel hardening remarkably increases for higher values of $L$ and hence,  using the mean of the effective gain instead of the true channel gains for cluster-wise processing  in CF-mMIMO systems    works very  well.  Even when it comes to the network-wide processing, the statistical CSI-based design in the fronthaul-limited  CF-mMIMO system is capable to achieve  $97\%$ of the average sum-SE with instantaneous CSI-based design.  
It is an interesting observation, since statistical CSI-based designs provide better trade-offs between performance, complexity, and signaling overhead.
 \end{itemize}

\begin{figure}[t]
	\centering
	\includegraphics[width=0.43\textwidth]{  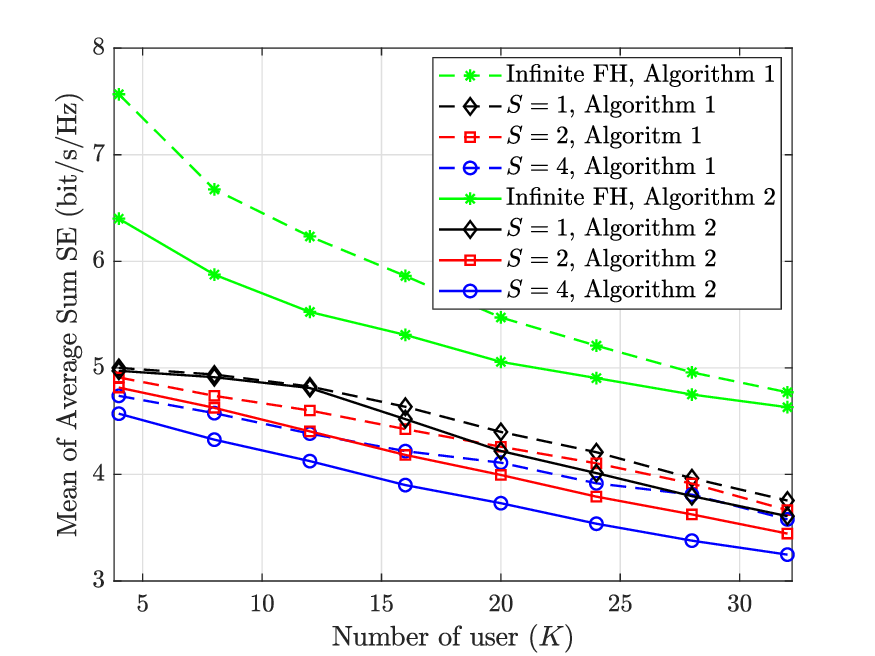}
	\vspace{-0.3em}
	\caption{{Mean of average sum SE  versus number of users, $K$, where $L=24$,    $M=10$, $\mathrm{FH_{max}}=10$ Gbps, and $M_{\mathrm{mo}}=32$.  }}
	\label{fig:Fig4}
\end{figure}
\subsubsection{ Impact of the Number of Users }
Figure~\ref{fig:Fig4} shows the mean of average sum-SE, i.e.,  (average sum-SE)$/K$, of a fronthaul-limited CF-mMIMO system, as a function of the number of users for different number of PCs. We observe that in the regime of small values of  $K$, the performance gap between all the schemes are very  small. On the other hand, by increasing $K$, the sum-SE performance of all cases deteriorates. Nevertheless, the CF-mMIMO system using the cluster-wise processing   still yields excellent SE performance compared to network-wide processing. More specifically, when $K = 28$, the performance loss of cluster-wise processing with $S=2$ ($S=4$)  PCs compared to the network-wide processing is less than $1.2\%$ ($3\%$) for instantaneous CSI-based design. Importantly, this is the case with statistical CSI-based design; the performance loss of cluster-wise processing with $S=2$ ($S=4$)  PCs  is less than $3\%$ ($5\%$). These negligible performance losses verify the importance of an adequate cluster-wise  processing along with statistical CSI-based design to provide a better
performance/implementation complexity trade-off compared to its instantaneous CSI-based network-wide processing counterpart.

\begin{figure}[t]
	\centering
	\includegraphics[width=0.43\textwidth]{  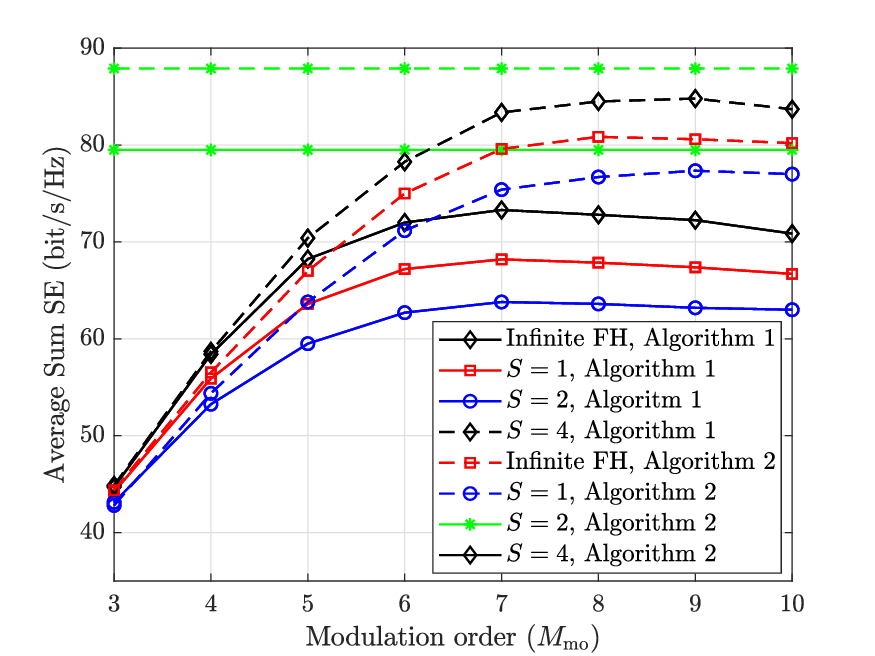}
	\vspace{-0.5em}
	\caption{Average sum SE  versus modulation order, $M_{\mathrm{mo}}$, where $L=24$, $K=15$,  $M=10$,  $\mathrm{FH_{max}}=10$ Gbps.}
	\vspace{-0.2em}
	\label{fig:Fig5}
\end{figure}
\subsubsection{ Effect of the Modulation Order}
In Fig.~\ref{fig:Fig5}    we investigate the  average sum SE performance of the CF-mMIMO system with   cluster-wise processing architecture for different number of PCs as a function of $M_{\mathrm{mo}}$.  It is observed that there exists an optimal value of $M_{\mathrm{mo}}$ for each scheme that maximizes the average sum SE performance. This is reasonable because, on one hand, the fronthaul consumption for transmitting information symbols increases with $M_{\mathrm{mo}}$, which reduces $\Kmax$. On the other hand, a higher $M_{\mathrm{mo}}$ enables higher SE under the fronthaul constraint~\eqref{eq:FHMorder}. Therefore, there is a trade-off between the SE and $M_{\mathrm{mo}}$. In addition, the sum-SE performance gap for the systems relying on instantaneous CSI-based design and statistical CSI-based design increases with higher $M_{\mathrm{mo}}$, while the performance gap between cluster-wise processing schemes and network-wide processing is  relatively small when $M_{\mathrm{mo}} \leq 4$. Therefore, for the application scenarios with lower $M_{\mathrm{mo}}$,  cluster-wise processing relying on statistical CSI-based design is undoubtedly a better choice.

\begin{figure}[t]
	\centering
	\includegraphics[width=0.43\textwidth]{  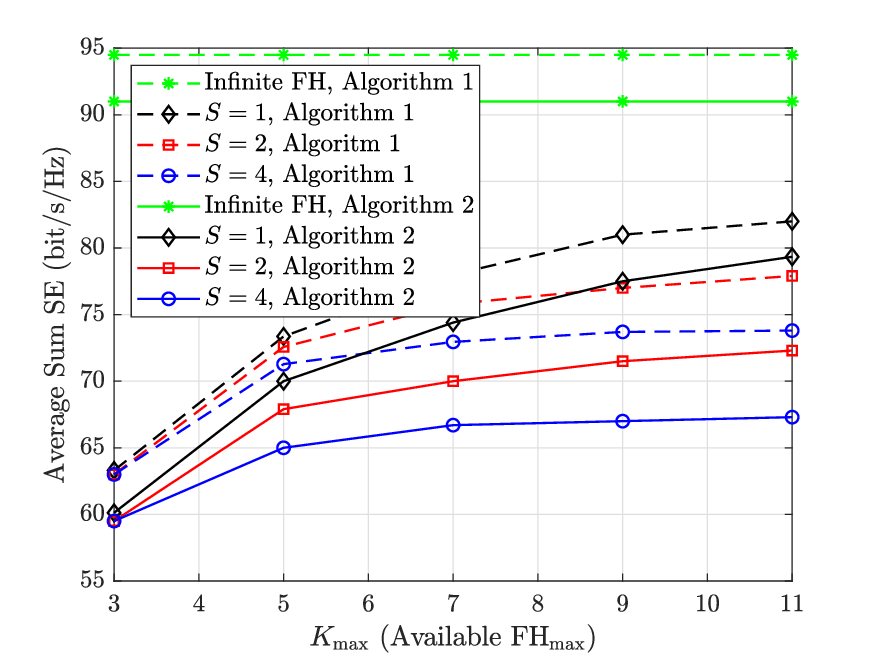}
	\vspace{-0.5em}
	\caption{Average sum SE  versus $\Kmax$ ($\mathrm{FH}_{\mathrm{max}}$), where $L=24$, $K=20$,  $M=8$,    $M_{\mathrm{mo}}=32$.}
	\vspace{-0.2em}
	\label{fig:Fig6}
\end{figure}

\subsubsection{ Impact of the Available Fronthaul Capacity}  Figure~\ref{fig:Fig6} illustrates   the effect of the maximum available fronthaul capacity, $\mathrm{FH}_{\mathrm{max}}$,  on the sum-SE performance of the  CF-mMIMO system with the  proposed   cluster-wise processing.  Different values of $\mathrm{FH}_{\mathrm{max}}$, correspond to different values of $\Kmax$, which are calculated based on~\eqref{eq:Kmax}.
It is observed that the limited fronthaul capacity reduces the system  performance.  For example, under the instantaneous CSI-based design, when the fronthaul capacity $\mathrm{FH}_{\mathrm{max}}=6$ Gbps (or equivalently $\Kmax=5$), there is $22\%$ performance loss due to fronthaul limitation for the centralized scheme with $S=1$ compared to the case of infinite fronthaul links. This loss  slightly increase to $23\%$ when $S=2$.  On the other hand, we observe that upon increasing  $\Kmax$ the sum-SE performance of all cases increases, specially for centralized schemes.
 Simulation results also confirm that multiple  clusters  is better suited for  CF-mMIMO network architectures with low-capacity  fronthaul links.

{Finally, we would like to emphasize that fairness among users is inherently promoted through     our optimization framework. Specifically, we adopt a proportional fairness strategy by weighting each user's  pseudo-SE  with a priority coefficient \( w_k \), which is chosen as the inverse of the average achievable SE experienced by that user. For example, in the Monte Carlo evaluation of the instantaneous CSI-based \textbf{Algorithm 1}, \( w_k \) at small-scale fading realization \( n \) is computed as the inverse of the user’s average   achievable SE up to realization \( n - 1 \).
Figure~\ref{fig:Fig9} presents the  CDF  of the per-user SE achieved by \textbf{Algorithm 1}. The 5th percentile SE (i.e., the SE value that 95\% of users exceed) is reported as $1.7, 1.4,$ and $1$ bit/s/Hz for  $S = 1, 2, 4$, respectively. These results indicate that even the least-served users maintain a non-negligible SE, thereby confirming that our approach achieves a desirable balance between efficiency and fairness.
}

\begin{figure}[t]
	\centering
	\includegraphics[width=0.43\textwidth]{  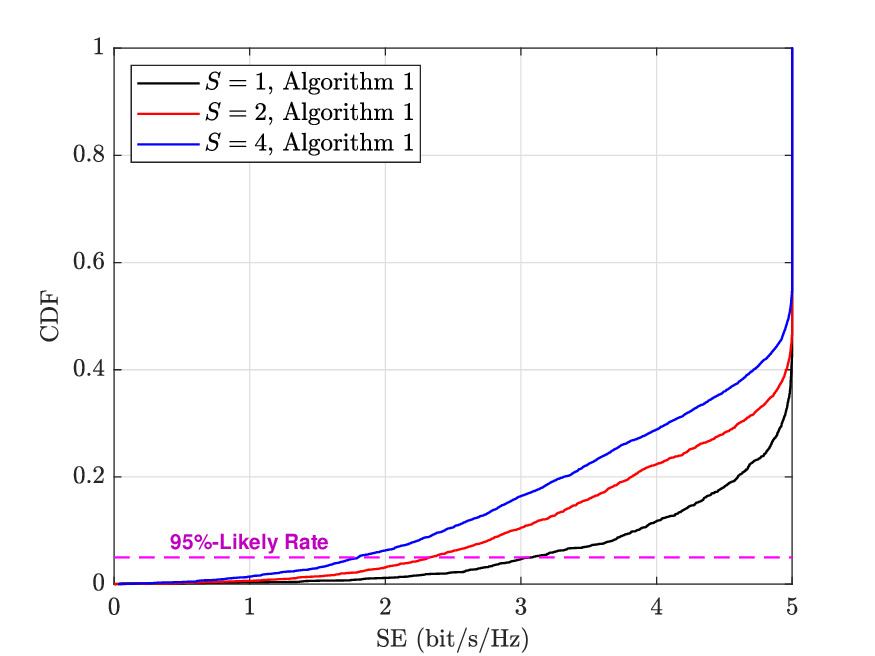}
	\vspace{-0.5em}
	\caption{ {CDF of the per-user SE achieved by Algorithm 1,  where $L=24$, $K=15$,  $M=10$, $\mathrm{FH_{max}}=10$ Gbps, and $M_{\mathrm{mo}}=32$. }}
	\vspace{-0.5em}
	\label{fig:Fig9}
\end{figure}
 {\subsection{ Computational Complexity }
Here, we discuss the complexity of the  proposed instantaneous CSI-based cluster-wise design  in \textbf{Algorithm 1} and statistical CSI-based cluster-wise design in \textbf{Algorithm 2}. We note that in the instantaneous CSI-based   design, both precoding and resource allocation for each PC are re-calculated based on the small-scale fading time scale (instantaneous channel conditions). It is important to note that small-scale fading coefficients fluctuate rapidly across time and frequency. Therefore, the required channel acquisition and computational complexity of \textbf{Algorithm~$1$} become prohibitive as the network size increases. However, in statistical CSI-based \textbf{Algorithm~$2$}, resource allocation is updated according to the large-scale fading time scale (statistical channel properties). Large-scale fading coefficients remain constant across frequencies and vary much more slowly over time compared to small-scale fading. Consequently,  with the statistical CSI-based designs, we need to perform \textbf{Algorithm~$2$} only $1$ time, and use the results for all subcarriers and several frame duration~\cite{Zahra:2025:LectureNote}.  
To quantify these differences,   the considered CF-mMIMO system with $100$ MHz bandwidth and $136$ kHz subcarrier spacing yields approximately $735$ subcarriers. Assuming $10$ transmit time intervals (TTIs) per 10 ms frame, \textbf{Algorithm~\ref{Alg:WMMSE1}} must be executed $735 \times 10 = 7350~\text{times per frame}$.
In contrast, \textbf{Algorithm~\ref{Alg:WMMSE2}} is executed \textbf{once}, dramatically reducing overhead.}

{Moreover, both algorithms are implemented in a  parallel cluster-wise manner, with each PC computing independently. Table~\ref{tab:runtime} summarizes the observed  average number of iterations  and  runtime per PC  for Algorithm~\ref{Alg:WMMSE1}. The observed reduction in runtime as $S$ increases confirms the  scalability  of the cluster-wise architecture.
Finally, in what follows, we calculate the computational complexity per PC $\Csm$ and per  iteration $i$.
 The computational complexity of Step 4 and Step 5 in \textbf{Algorithm 1} is $\mathcal{O}(|\Usm|^2|\Csm| L)$, while the  computational complexity for Step 6 is   $\mathcal{O}(|\Usm|)$. Step 7 of \textbf{Algorithm 1} involves solving a QCQP problem, which can be equivalently reformulated as a second-order cone programming (SOCP) problem. 
As discussed in~\cite{Zhou:JSP:2020}, the complexity of solving an SOCP problem   is $O(N_{\mathrm{so}} M_{\mathrm{so}}^{3.5}+ N_{\mathrm{so}}^3 M_{\mathrm{so}}^{2.5})$, where $M_{\mathrm{so}}$  is the number of second order cone constraints and $N_{\mathrm{so}}$ is the dimension of each.
Problem~\eqref{eq:WMMSE2} contains $|\Csm|$ transmit power constraints   and $|\Csm|$ fronthaul constraints with dimension $L|\Usm|$. Therefore, the complexity of solving Problem~\eqref{eq:WMMSE2} 
is $O(L |\Usm||\Csm|^{3.5}+ L^3|\Usm|^3 |\Csm|^{2.5})$.  Accordingly, the total computational complexity of \textbf{Algorithm \ref{Alg:WMMSE1}} per  iteration is $O(L |\Usm||\Csm|^{3.5}+ L^3|\Usm|^3 |\Csm|^{2.5})$.
The computational complexity of \textbf{Algorithm \ref{Alg:WMMSE2}} \emph{per iteration} is the same as that of \textbf{Algorithm \ref{Alg:WMMSE1}}.}
\begin{table}[t]
\centering
\caption{{Average number of iterations and runtime per PC versus number of PCs, $S$.}}
{\begin{tabular}{|c|c|c|}
\hline
\textbf{$S$} & \textbf{Avg. Number of Iterations} & \textbf{Avg. Runtime (sec)} \\
 \hline\hline
1 & 6.0 & 141 \\
 \hline 
2 & 4.5 & 45 \\
 \hline
4 & 3.5 & 9.5 \\
\hline
\end{tabular}}
\label{tab:runtime}
\end{table}

\section{Conclusions}
This paper has introduced a general cluster-wise processing network architecture for a fronthaul-limited   CF-mMIMO  system. We  adopt the hybrid  SLINR criterion and proposed two optimization approaches  to maximize the cluster-wise weighted sum pseudo-SE under per-AP transmit power and fronthaul constraints, namely 1) instantaneous CSI-based cluster-wise processing where   precoding, user association, and power allocation are jointly optimized; 2) statistical CSI-based cluster-wise processing where user association and power allocation within a given cluster are jointly optimized. Two modified WMMSE-based algorithms
were  proposed to solve the challenging formulated non-convex mixed-integer problems.  We  investigated the trade-offs provided by the CF-mMIMO system with different number of PCs and highlighted the importance of the appropriate choice of cluster-wise processing relying on either  instantaneous CSI-based or statistical CSI-based design for different system setups. 
Numerical results revealed that  performance loss from increasing the number of processing clusters are primarily influenced by the number of APs, AP antennas, and fronthaul limitations. An interesting observation was that the proposed cluster-wise  processing,  relying only on local CSI, performs fairly close to network-wide alternative that relies on global CSI-knowledge for varying system setups. Investigating the scenarios involving pilot contamination  and the loss of orthogonality between OFDM subcarriers in fronthaul-limited CF-mMIMO systems with cluster-wise processing is recommended for future studies.

\appendices

\bibliographystyle{IEEEtran}
\bibliography{IEEEabrv,Ref_CellFreeFronthaul}
\end{document}